\documentclass[a4paper]{article}

\usepackage{amsmath,amssymb,amsthm,mathtools}
\usepackage{algorithm}
\usepackage{algpseudocode}
\usepackage{fullpage}
\usepackage{array}
\usepackage{color}
\usepackage[utf8]{inputenc}
\usepackage[T1]{fontenc}
\usepackage{authblk}              % Standard for author/affiliation formatting
\usepackage[colorlinks=true, linkcolor=blue, urlcolor=blue, citecolor=blue]{hyperref}

\newtheorem{theorem}{Theorem}[section]
\newtheorem{lemma}[theorem]{Lemma}
\newtheorem{proposition}[theorem]{Proposition}
\newtheorem{corollary}[theorem]{Corollary}
\newtheorem{definition}[theorem]{Definition}
\newtheorem{assumption}[theorem]{Assumption}
\newtheorem{remark}[theorem]{Remark}

\newcommand{\cube}{\{-1,1\}^n}
\newcommand{\E}{\mathbb{E}}
\newcommand{\R}{\mathbb{R}}
\newcommand{\TV}{\operatorname{TV}}

\newcommand{\email}[1]{\texttt{\href{mailto:#1}{#1}}}
\newcommand{\authorus}[4]{
    \begin{tabular}{ccc}
        {#1}\\ {\small {#2}}\\ {\footnotesize \email{#3}}
    \end{tabular}
}

\begin{document}

\title{Hierarchical Fourier Approximation for Variational Quantum Distribution Learning}
\author{}
\date{\vspace{-60pt}}
\maketitle
\begin{center}
\authorus{Taha Hoseinpour Asli}{Sharif University of Technology}{tahahoseinpourasli@gmail.com}{1}
\authorus{Sajjad Hashemian}{University of Tehran}{sajjad.hashemian@ut.ac.ir}{2}
\authorus{Ebrahim Ardeshir-Larijani}{Iran University of Science and Technology}{larijani@iust.ac.ir}{3}
\end{center}

\begin{abstract}
We study variational quantum distribution learning through a hierarchy of Walsh--Fourier approximations on the Boolean cube. At each level, a selected set of target Fourier coefficients defines a spectral truncation, which is projected onto the probability simplex and used as the target of a quantum circuit Born machine. Parameters learned at one level initialize the next through a warm-start map. We prove an end-to-end expected learning guarantee where the approximation term is determined by the omitted Fourier mass, while a normalized unbiased estimator yields an explicit statistical bound for empirical truncations. We then instantiate the abstract discrepancy conditions for total variation distance and relate the resulting distributional error to quantum-state fidelity. The total-variation specialization incurs the explicit factor $2^{n-1}$ under our normalized $\ell_2$ convention and is therefore informative only for sufficiently concentrated Fourier tails. The framework does not establish global trainability or eliminate barren plateaus; rather, it identifies the conditions under which low-to-high spectral training admits a approximation--estimation--optimization analysis.
\end{abstract}

\section{Introduction}

In quantum distribution learning, a parameterized quantum circuit is trained to reproduce a target distribution over bit strings. A quantum circuit Born machine (QCBM) supplies the model distribution, while measurements provide estimates of the training objective. A direct fit with a highly expressive circuit may nevertheless face a poorly conditioned landscape, statistical noise, and sensitivity to initialization.

We organize this problem by the spectral resolution of the target distribution. The target is decomposed in the Walsh--Fourier basis and replaced by a nested sequence of increasingly expressive truncations. Training proceeds along this sequence rather than fitting the full target at once. The circuit ansatz remains part of each stage, but the hierarchy itself is determined by which Fourier components of the target are retained. This separation lets us track approximation, optimization, and statistical error as distinct quantities.

The hierarchy should be viewed as an analytical and algorithmic organization of the learning problem. It does not by itself guarantee that a nonconvex variational objective is easy to optimize, nor does it remove measurement noise. Instead, it makes the sources of error explicit. Omitting Fourier coefficients creates approximation error before any circuit is trained, terminating a stage away from its optimum creates optimization error; and replacing population Fourier coefficients by sample estimates creates statistical error. The purpose of the staged construction is to keep these effects separate long enough to analyze each one under its own assumptions.

The closest algorithmic antecedent is the hierarchical QCBM training procedure of Gharibyan, Su, and Tepanyan \cite{gharibyan2023hierarchical}. Their hierarchy begins with the most significant qubits of a bit-string encoding, trains a coarse-grained distribution, and enlarges the active circuit by adding less significant qubits and new parameters. The hierarchy we studied here is different in its organizing variable since we dont need to change the number of qubits, and level $k$ is defined by a nested Fourier support set $\mathcal S_k$. This makes the population approximation error an explicit Fourier-tail quantity. Our analysis should therefore be read as a target-spectrum formulation of staged variational learning, rather than as a claim to have introduced hierarchical circuit training itself.

The contribution lies in how these ingredients are assembled around the target-distribution hierarchy. In particular, we formulates a Fourier-resolution hierarchy for variational distribution learning, in which the population truncation, its simplex projection, the empirical truncation, and the stage QCBM are kept as distinct objects. We then provide and combine Fourier approximation, finite-sample coefficient estimation, finite optimization, and ansatz expressivity in an end-to-end expected learning bound. The expressivity term is retained explicitly and vanishes only under the realizability condition stated in Theorem~\ref{thm:overall}. Lastly, we instantiate the abstract discrepancy conditions for total variation and connects the stage-level distributional error to state fidelity. The factor $2^{n-1}$ is explicit, and the paper records the regime in which this specialization becomes vacuous.
With the ansatz-approximation contribution included inside $E_{\mathrm{optimization}}$ as defined in Theorem~\ref{thm:overall}, the resulting accounting bound has the form
\[
\boxed{
E_{\mathrm{total}}
\le
E_{\mathrm{approx}}
+
E_{\mathrm{optimization}}
+
E_{\mathrm{statistical}}.
}
\]

All optimization conclusions are conditional and local. In particular, the analysis assumes that a stage solution lies in a smooth, strongly convex neighborhood and that the warm-start map places the next stage inside the corresponding neighborhood. These assumptions are not asserted to hold for arbitrary expressive QCBMs. The final theorem should therefore be read as an end-to-end accounting result under stated approximation, optimization, statistical, and expressivity conditions, rather than as a universal trainability theorem.

\subsection{Related Work}
Our approximation analysis uses the standard Walsh--Fourier representation of functions on the Boolean cube \cite{odonnell2014analysis}. The ordering from low to high Fourier degree is also similar to spectral bias in classical neural networks, where lower-frequency components are often learned earlier than higher-frequency ones \cite{rahaman2019spectral}. We use this observation only as motivation. The sets $\mathcal S_k$ are prescribed explicitly, and none of our optimization arguments assumes that a QCBM exhibits spectral bias.

Fourier methods have a different role in studies of data-encoding quantum circuits. In that setting, the circuit output is analyzed as a function of a classical input, and the encoding gates determine which frequencies the model can represent \cite{schuld2021effect,perezsalinas2020data}. Our object of study is instead the Walsh spectrum of the target probability mass function $p:\cube\to\R$. Fourier degree in the target distribution should therefore not be identified with frequencies generated by data encoding or by the circuit parameters. Establishing a useful correspondence between these structures remains an open expressivity question.

Quantum circuit Born machines have been proposed as generative models for classical data and trained with objectives including maximum mean discrepancy \cite{liu2018differentiable,benedetti2019generative}. Subsequent work has examined their implementation and the optimization barriers that arise in quantum generative modeling \cite{rudolph2023trainability}. 

The warm-start component is related to layerwise training and structured initialization of variational circuits \cite{skolik2021layerwise,grant2019initialization}. Gharibyan et al. use such a construction for distribution loading and train the most significant qubits first, append less significant qubits in the state $|+\rangle$, and reuse the learned parameters in the enlarged circuit \cite{gharibyan2023hierarchical}. Their study is primarily algorithmic and numerical. Our hierarchy is organized by retained Walsh coefficients rather than by the number or significance of active qubits. The map $W_k$ transfers parameters between these spectral stages. The gradient-descent analysis itself is standard; what is specific here is its conditional use within the Fourier approximation and statistical-error decomposition \cite{puig2024warm}.

\subsection{Mathematical Preliminaries}

Bit-string measurement outcomes are naturally identified with points of the Boolean cube. The $\{-1,1\}$ convention is particularly convenient because parity functions become the Walsh characters, giving an orthogonal coordinate system for describing correlations of different orders in a distribution.

We work on the Boolean cube $\cube=\{-1,1\}^n$ and denote its probability simplex by
\[
\Delta_{2^n}
=
\left\{
p:\cube\to[0,1]\;:\;\sum_{x\in\cube}p(x)=1
\right\}.
\]
For $S\subseteq[n]$, define the Walsh character
\[
\chi_S(x)=\prod_{i\in S}x_i.
\]
The normalized inner product is
\[
\langle f,g\rangle
=
\E_{x\sim \mathrm{Unif}(\cube)}[f(x)g(x)]
=
2^{-n}\sum_{x\in\cube}f(x)g(x).
\]
The normalization by $2^{-n}$ means that orthogonality is taken with respect to the uniform measure on the cube. We have to stress that the convention is important later and Fourier coefficients are defined using the uniform inner product even when samples are drawn from a nonuniform target distribution $p$. The empirical estimator in Section~\ref{sec:theoretical-analysis} therefore requires an explicit factor of $2^{-n}$.

The characters $\{\chi_S\}_{S\subseteq[n]}$ form an orthonormal basis under this inner product. For $f:\cube\to\R$, we write
\[
\widehat f(S)=\langle f,\chi_S\rangle,
\qquad
f(x)=\sum_{S\subseteq[n]}\widehat f(S)\chi_S(x).
\]
For a probability mass function, $\widehat p(S)$ records its correlation with the parity feature $\chi_S$ under the uniform normalization. Low-cardinality sets $S$ correspond to lower-order interactions, while larger sets encode higher-order parity structure. The hierarchy introduced in the next section selects which of these spectral features are retained at each stage.

Parseval's identity for the Walsh basis is \cite{odonnell2014analysis}
\begin{equation}\label{eq:parseval}
\|f\|_2^2
=
\E_x[f(x)^2]
=
\sum_{S\subseteq[n]}\widehat f(S)^2.
\end{equation}
Parseval's identity converts spectral energy into squared normalized $\ell_2$ norm. It is the basic tool behind the approximation analysis. Once a set of Fourier coefficients is omitted, the sum of their squares is exactly the squared $\ell_2$ norm of the resulting residual.

For $p,q\in\Delta_{2^n}$, we use
\[
\TV(p,q)=\frac12\sum_x |p(x)-q(x)|,
\qquad
\|p-q\|_2=\left(2^{-n}\sum_x(p(x)-q(x))^2\right)^{1/2}.
\]
The normalized $\ell_2$ distance is adapted to Parseval's identity, whereas total variation has the direct interpretation of a discrepancy between classical measurement distributions. Relating these two quantities later introduces a factor that depends exponentially on $n$; this dependence is retained explicitly in all total variation guarantees.

A parameterized $n$-qubit unitary $U(\theta)$, with $\theta\in\Theta\subseteq\R^d$, defines the QCBM distribution
\[
q_\theta(x)
=
|\langle x|U(\theta)|0^n\rangle|^2.
\]
A QCBM is therefore a family of classical distributions obtained by measuring parameterized quantum states in the computational basis, where the circuit parameters are chosen so that these Born probabilities approximate a prescribed target distribution.
A variational distribution-learning objective has the form
\[
L(\theta;p)=D(p,q_\theta),
\]
where $D$ is a discrepancy such as $\TV$, squared $\ell_2$, maximum mean discrepancy, or a regularized KL divergence.

\section{Hierarchical Fourier Approximation}
\label{sec:hfa}

A Fourier hierarchy selects nested sets of coefficients from the expansion above. Each level specifies which spectral features are visible to the learner at that stage, while coefficients outside the active set are deliberately omitted. Moving to a higher level retains a larger part of the target spectrum, so the hierarchy provides a controlled refinement of the target representation rather than a sequence of unrelated learning tasks.

\begin{definition}
We define a Fourier hierarchy as a nested sequence
\(
\mathcal{S}_0\subseteq\mathcal{S}_1\subseteq\cdots\subseteq\mathcal{S}_K\subseteq 2^{[n]}.
\)
The associated truncation operator $T_k$ is
\(
T_k f
=
\sum_{S\in\mathcal{S}_k}\widehat f(S)\chi_S.
\)
The canonical degree hierarchy is $\mathcal{S}_k=\{S\subseteq[n]:|S|\le k\}$.
\end{definition}

The definition allows arbitrary nested spectral sets. The degree hierarchy is the canonical choice because it introduces parity interactions in order of their cardinality, but the analysis uses only nestedness. Thus $T_k$ keeps exactly the coefficients in $\mathcal S_k$, and the choice of hierarchy determines which correlations are represented at level $k$.

We measure the residual in the normalized $\ell_2$ norm. This choice aligns the approximation error with Parseval's identity and makes the spectral contribution of every omitted coefficient explicit.

\begin{definition}
For $p\in\Delta_{2^n}$, define the approximation error
\(
E_{\mathrm{approx}}(k;p)
=
\|p-T_kp\|_2.
\)
\end{definition}

The next theorem is the basic reason for using a Fourier hierarchy. It identifies the approximation error with the Fourier mass omitted from $\mathcal S_k$. Consequently, the error introduced by the hierarchy is not just a modeling term, as $\mathcal S_k$ is chosen, it is determined exactly by the target coefficients outside that set.

\begin{theorem}
\label{thm:parsapprox}
For every $p:\cube\to\R$ and every hierarchy level $k$,
\[
\|p-T_kp\|_2^2
=
\sum_{S\notin\mathcal{S}_k}\widehat p(S)^2.
\]
\end{theorem}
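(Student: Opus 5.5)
The plan is to write the residual $p-T_kp$ explicitly in the Walsh basis and then apply Parseval's identity \eqref{eq:parseval} to it. The argument is purely linear-algebraic: it uses only the orthonormality of $\{\chi_S\}_{S\subseteq[n]}$ under the uniform inner product and the definition of $T_k$. It does not use that $p$ is a probability mass function, which is consistent with the statement being posed for every $p:\cube\to\R$.

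\textbf{Step 1: Fourier expansion of the residual.} Expand $p=\sum_{S\subseteq[n]}\widehat p(S)\chi_S$ and subtract the definition $T_kp=\sum_{S\in\mathcal S_k}\widehat p(S)\chi_S$. This gives
\[
r:=p-T_kp=\sum_{S\notin\mathcal S_k}\widehat p(S)\chi_S .
\]

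\textbf{Step 2: Identify the coefficients of $r$.} By linearity of $f\mapsto\widehat f(S)=\langle f,\chi_S\rangle$ and orthonormality $\langle\chi_S,\chi_T\rangle=\mathbf 1[S=T]$, the coefficients of $r$ are
\[
\widehat r(S)=\widehat p(S)\,\mathbf 1[S\notin\mathcal S_k].
\]
This is the one point that needs any care. Parseval is stated for the coefficients $\widehat f(S)=\langle f,\chi_S\rangle$, so we must confirm that the expansion of $r$ in Step 1 really gives its Fourier coefficients. Uniqueness of coordinates in an orthonormal basis guarantees this, and the computation is a single line.

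\textbf{Step 3: Apply Parseval.} Applying \eqref{eq:parseval} to $f=r$ gives
\[
\|p-T_kp\|_2^2=\sum_{S\subseteq[n]}\widehat r(S)^2=\sum_{S\notin\mathcal S_k}\widehat p(S)^2,
\]
which is the claim.

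Equivalently, one can expand $\|r\|_2^2=\langle r,r\rangle$ as a double sum over $S,T\notin\mathcal S_k$ and note that the cross terms vanish by orthonormality. There is no real obstacle here. The only things to keep consistent are the $2^{-n}$ normalization in both the norm and the inner product, which is already built into \eqref{eq:parseval}, and the fact that the nestedness of the hierarchy plays no role in this level-wise identity.
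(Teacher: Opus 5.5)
Your proposal is correct and follows essentially the same route as the paper: write the residual $p-T_kp$ in the Walsh basis as $\sum_{S\notin\mathcal S_k}\widehat p(S)\chi_S$ and apply Parseval's identity~\eqref{eq:parseval}. Your Step 2 makes explicit that these are the Fourier coefficients of the residual, which the paper leaves implicit, so it adds care rather than a new idea.
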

\begin{proof}
The residual has Fourier expansion
\[
p-T_kp=\sum_{S\subseteq[n]}\widehat p(S)\chi_S-\sum_{S\in\mathcal S_k}\widehat p(S)\chi_S=\sum_{S\notin\mathcal S_k}\widehat p(S)\chi_S.
\]
The claim follows immediately from Parseval's identity~\eqref{eq:parseval}; see also \cite{odonnell2014analysis}.
\end{proof}

The proof uses only orthogonality of the Walsh basis. No property of a circuit ansatz or optimization method enters at this point. This separation is useful since the hierarchy determines the best spectral approximation available at level $k$ before the variational model is trained.

Since the hierarchy is nested, $E_{\mathrm{approx}}(k;p)$ is nonincreasing in $k$, and it vanishes when $\mathcal S_K=2^{[n]}$.

The preceding result is an exact $\ell_2$ identity. As there is no need for the truncation $T_kp$ to be a probability distribution, total variation is not applied to it directly. We first record the underlying finite-dimensional norm comparison for arbitrary real-valued functions and then apply it, after simplex projection, to the genuine distributions $p$ and $p_k$.

\begin{figure}[t]
\centering
\includegraphics[width=0.7\linewidth]{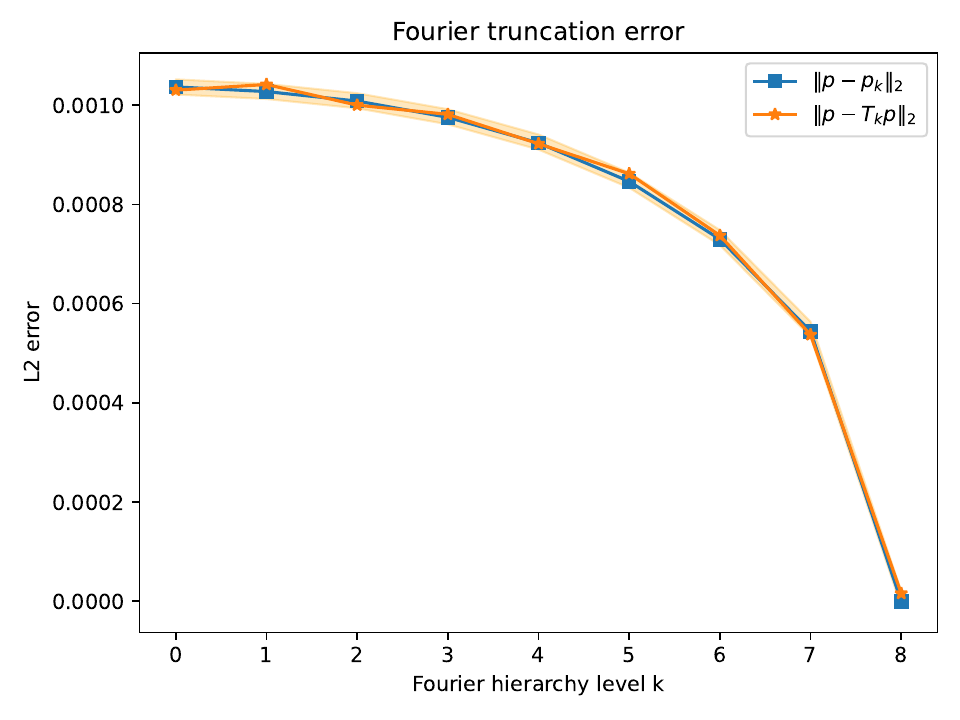}
\caption{The figure tests the implementation of the target hierarchy. The hierarchy level $k$ on the horizontal axis and $\|p_{\delta,K}-T_kp_{\delta,K}\|_2$ on the vertical axis. Show the numerical evaluation and the exact curve for several values of $\delta$, with $n$, $K$, and $\alpha$ fixed. 
}
\label{fig:experimental-truncation}
\end{figure}

\begin{proposition}
\label{prop:tvstab}
For all $f,g:\cube\to\R$,
\[
\frac12\|f-g\|_1
:=
\frac12\sum_{x\in\cube}|f(x)-g(x)|
\le
2^{n-1}\|f-g\|_2.
\]
\end{proposition}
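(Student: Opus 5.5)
The plan is to set $h=f-g$ and reduce the claim to a comparison between the unnormalized $\ell_1$ norm and the normalized $\ell_2$ norm on a set of $N=2^n$ points. The only tool is Cauchy--Schwarz applied to the vectors $(|h(x)|)_{x\in\cube}$ and the all-ones vector. No Fourier structure is needed here, so Parseval's identity~\eqref{eq:parseval} does not enter at this stage. It will matter only when this proposition is later combined with Theorem~\ref{thm:parsapprox}.

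The key steps, in order:
\begin{enumerate}
\item Apply Cauchy--Schwarz to obtain
\[
\sum_{x\in\cube}|h(x)|\le \Bigl(\sum_{x\in\cube}1\Bigr)^{1/2}\Bigl(\sum_{x\in\cube}h(x)^2\Bigr)^{1/2}=2^{n/2}\Bigl(\sum_{x\in\cube}h(x)^2\Bigr)^{1/2}.
\]
\item Rewrite the unnormalized sum of squares through the normalized norm, $\sum_x h(x)^2=2^n\|h\|_2^2$. Substituting gives $\|h\|_1\le 2^{n/2}\cdot 2^{n/2}\|h\|_2=2^n\|h\|_2$.
\item Halve both sides to get $\tfrac12\|f-g\|_1\le 2^{n-1}\|f-g\|_2$, which is the stated inequality.
\end{enumerate}

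There is no real obstacle. The only point requiring care is bookkeeping of the $2^{-n}$ normalization in $\|\cdot\|_2$, since that is exactly what produces the full factor $2^n$ rather than $2^{n/2}$. As a sanity check on sharpness, equality in Cauchy--Schwarz holds when $|h|$ is constant on the cube. For such $h$ the two sides coincide, so the factor $2^{n-1}$ cannot be improved for arbitrary real-valued $f,g$. This is consistent with the abstract's remark that the total-variation specialization is informative only for concentrated Fourier tails.
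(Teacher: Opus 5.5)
Your proposal is correct and follows the paper's argument: Cauchy--Schwarz against the all-ones vector on the $2^n$ points gives $\sum_x|h(x)|\le 2^n\|h\|_2$ after accounting for the $2^{-n}$ normalization, and halving gives the claim. Your sharpness remark (equality when $|h|$ is constant) is also correct, although the paper does not include it.
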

\begin{proof}
This is the standard $\ell_1$--$\ell_2$ comparison on a space of size $2^n$. Cauchy--Schwarz gives $\sum_x|f(x)-g(x)|\le2^n\|f-g\|_2$, and multiplication by $1/2$ yields the claim.
\end{proof}

The dimension-dependent factor is important and in high dimensions, a small normalized $\ell_2$ error does not necessarily yield a useful total variation bound. Section~\ref{sec:qdl} and the Discussion describe the regime in which this conversion is nonvacuous.

Fourier truncation preserves the retained coefficients but need not preserve nonnegativity. Thus $T_kp$ is mathematically convenient as an orthogonal projection, yet it may not itself be a valid distribution for variational training. We therefore project it onto $\Delta_{2^n}$. The following lemma shows that enforcing validity in this way cannot increase the $\ell_2$ distance to the true distribution $p$.

\begin{lemma}[Stability under simplex projection]\label{lem:projstab}
Let $\Pi_\Delta$ be Euclidean projection onto $\Delta_{2^n}$. Then for all $p\in\Delta_{2^n}$ and all $f\in\R^{\cube}$,
\[
\|p-\Pi_\Delta(f)\|_2\le\|p-f\|_2.
\]
In particular, $\|p-\Pi_\Delta(T_kp)\|_2\le\|p-T_kp\|_2$.
\end{lemma}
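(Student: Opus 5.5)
The plan is to use the standard fact that Euclidean projection onto a closed convex set is nonexpansive, or more precisely that it satisfies the obtuse-angle (variational) inequality. First observe that $\Delta_{2^n}$ is a nonempty, closed, convex subset of $\R^{\cube}$. The normalized norm $\|\cdot\|_2$ is the ordinary Euclidean norm scaled by the constant $2^{-n/2}$, so it comes from the inner product $\langle\cdot,\cdot\rangle$. Hence ``Euclidean projection'' is the same map whether we use the unnormalized or the normalized inner product: minimizers of $\|f-q\|_2$ over $q\in\Delta_{2^n}$ do not change under a positive rescaling of the norm. Existence and uniqueness of $\Pi_\Delta(f)$ then follow from strict convexity of the squared norm together with compactness of the simplex.

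Next, write $r=\Pi_\Delta(f)$. Optimality of $r$ gives the variational inequality $\langle f-r,\,q-r\rangle\le 0$ for every $q\in\Delta_{2^n}$. To derive it, expand $\|f-(r+t(q-r))\|_2^2\ge\|f-r\|_2^2$ for $t\in(0,1]$, which is allowed because $r+t(q-r)\in\Delta_{2^n}$ by convexity, then divide by $t$ and let $t\to0$. Apply this with $q=p$, which is legitimate since $p\in\Delta_{2^n}$ by hypothesis. Then expand
\[
\|p-f\|_2^2=\|p-r\|_2^2+\|r-f\|_2^2+2\langle p-r,\,r-f\rangle .
\]
The cross term equals $-2\langle f-r,\,p-r\rangle\ge0$, so $\|p-f\|_2^2\ge\|p-r\|_2^2+\|r-f\|_2^2\ge\|p-r\|_2^2$. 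Taking square roots gives the claim.

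The ``in particular'' statement is the special case $f=T_kp$, which lies in $\R^{\cube}$. Combined with Theorem~\ref{thm:parsapprox}, it also shows that the projected target is within the omitted Fourier mass of $p$.

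There is no real obstacle here. The only point needing care is the normalization: one must check that the projection defined with respect to the normalized inner product is the same as the usual Euclidean projection. The variational inequality is the one substantive step, and it is standard.
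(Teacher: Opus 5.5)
Your proof is correct and uses the same mechanism as the paper: the paper cites nonexpansiveness of projection onto a nonempty closed convex set, uses $\Pi_\Delta(p)=p$, and rescales by $2^{-n/2}$, whereas you derive the needed one-point inequality yourself from the variational characterization. Your version is self-contained and in fact gives the slightly stronger Pythagorean bound $\|p-f\|_2^2\ge\|p-\Pi_\Delta(f)\|_2^2+\|\Pi_\Delta(f)-f\|_2^2$, which the lemma does not require.
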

\begin{proof}
The simplex $\Delta_{2^n}$ is nonempty, closed, and convex. Euclidean projection onto such a set is nonexpansive \cite{bauschke2017convex}. Since $p\in\Delta_{2^n}$, one has $\Pi_\Delta(p)=p$, and therefore
\[
\|p-\Pi_\Delta(f)\|_2
=
\|\Pi_\Delta(p)-\Pi_\Delta(f)\|_2
\le
\|p-f\|_2.
\]
The same inequality holds for the normalized $\ell_2$ norm used here because it is the Euclidean norm multiplied by the constant $2^{-n/2}$.
\end{proof}

This result justifies using $p_k:=\Pi_\Delta(T_kp)$ as the stage target. Combining the proposition and lemma gives
\[
\TV(p,p_k)
\le
2^{n-1}\|p-p_k\|_2
\le
2^{n-1}\|p-T_kp\|_2.
\]
Thus the spectral truncation supplies the approximation structure, while simplex projection restores the probability constraints without worsening the normalized $\ell_2$ guarantee.

\section{Hierarchical Variational Learning}

At level $k$, the projected truncation is the target distribution and the circuit family supplies the model class. The stage therefore asks the QCBM to fit the spectral resolution currently selected by the hierarchy, rather than the full target distribution. Let the stage ansatz be
\[
\Theta_k\subseteq\R^{d_k},
\qquad
q_{k,\theta}(x)=|\langle x|U_k(\theta)|0^n\rangle|^2.
\]
The corresponding target is $p_k=\Pi_\Delta(T_kp)$, the projection of the $k$th Fourier truncation onto the probability simplex. The objects $p$, $T_kp$, and $p_k$ have different roles. $p$ is the true population distribution, $T_kp$ is its possibly nonphysical spectral truncation, and $p_k$ is the valid distribution actually used in the stage objective.

\begin{definition}
For a discrepancy $D$, the $k$th variational learning problem is
\[
\min_{\theta\in\Theta_k} L_k(\theta)
=
D(p_k,q_{k,\theta}).
\]
\end{definition}

The stage objective measures how well the level-$k$ circuit family represents $p_k$. Even if the spectral approximation error is small, a restricted ansatz may fail to represent this target exactly, and a finite optimizer may not reach the best parameter in $\Theta_k$. These effects are accounted for later through the optimization and expressivity terms.

The parameter obtained at level $k$ is transferred to level $k+1$ by a warm-start operator. This map links the stage objectives, the parameters learned at the coarser spectral level provide the initialization for the finer level.

\begin{definition}
A warm-start operator is a map
\[
W_k:\Theta_k\to\Theta_{k+1}
\]
used to initialize level $k+1$ from a solution at level $k$.
\end{definition}

No universal form of $W_k$ is imposed. Its analytical role is captured by the warm-start compatibility assumption in \ref{sec:theoretical-analysis}, which requires the transferred parameter to lie in a neighborhood where the next objective has the stated local geometry.

\begin{remark}[Exact versus empirical hierarchy]\label{rem:exactempirical}
Step 1 of Algorithm~\ref{alg:alg} admits two regimes. In the \emph{oracle} regime, $\widehat p(S)$ is computed exactly, for example from a closed-form description of $p$, and the level-$k$ target is therefore $p_k=\Pi_\Delta(T_kp)$, as defined in the previous section. In the \emph{sample} regime, $\widehat p(S)$ is replaced by the empirical coefficient introduced in Theorem~\ref{thm:stat}. The distribution used for training at level $k$ is then the projected empirical truncation $\widehat p_k:=\Pi_\Delta(\widetilde p_k)$. The overall learning guarantee in Theorem~\ref{thm:overall} is stated for the sample regime, which is relevant to physical experiments with finite measurement budgets. The oracle regime is recovered by setting $E_{\mathrm{statistical}}=0$.
\end{remark}

This distinction prevents population and empirical targets from being conflated. Approximation error is defined using the population truncation, whereas statistical error measures the change caused by replacing its coefficients with sample estimates. The circuit at the final stage is trained against the empirical target in the finite-sample analysis.

Algorithm~\ref{alg:alg} constructs the truncated target at each level, initializes the stage directly or from the preceding solution, and optimizes to the prescribed tolerance. It should be read as a conceptual pipeline. The algorithm specifies which target is formed and how parameters are transferred, but it does not itself guarantee that a practical nonconvex optimizer reaches the requested tolerance; that guarantee is conditional on the local assumptions introduced next.

\begin{figure}
    \centering
    \includegraphics[width=0.9\linewidth]{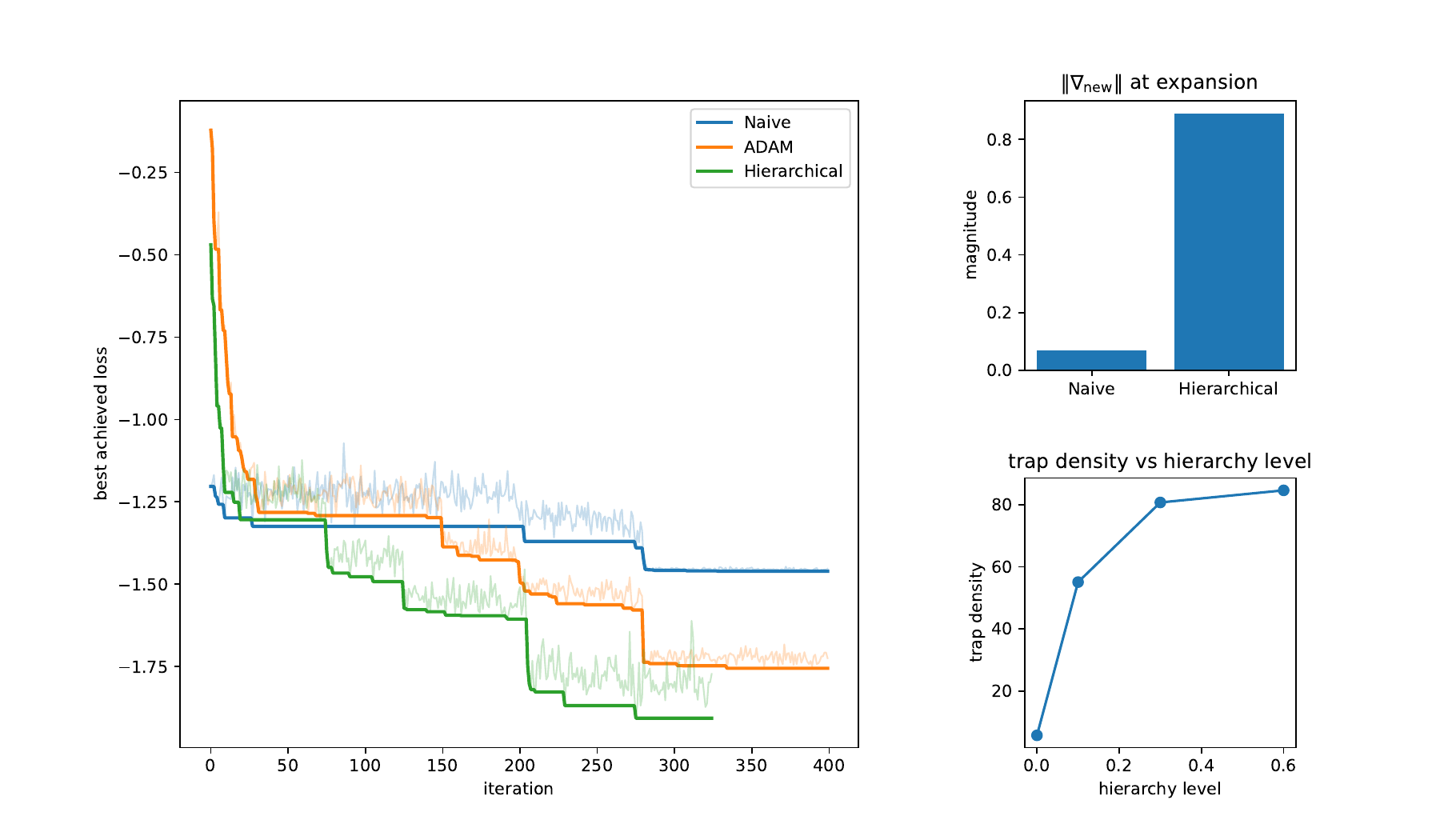}
    \caption{\textbf{(Left)} Convergence of the best achieved loss versus iteration for the Naive, ADAM, and Hierarchical methods. The faint lines represent the raw loss, while the solid lines indicate the cumulative minimum loss. \textbf{(Top Right)} Magnitude of the normalized gradient $\|\nabla_{\text{new}}\|$ at the expansion point, comparing the Naive and Hierarchical approaches. \textbf{(Bottom Right)} The relationship between the hierarchy level and the trap density, showing how the spectral truncation affects the density of the approximation space via the hierarchy level.}
    \label{fig:all}
\end{figure}

\begin{algorithm}[t]
\caption{Hierarchical Fourier Variational Learning}
\label{alg:alg}
\begin{algorithmic}[1]
\Require Samples or oracle access to target distribution $p$; hierarchy $\{\mathcal{S}_k\}_{k=0}^K$; ansatz family $\{U_k(\theta)\}_{k=0}^K$; discrepancy $D$; tolerances $\{\varepsilon_k\}$.
\Ensure Parameter $\theta_K$ defining the final QCBM distribution $q_{K,\theta_K}$.
\For{$k=0,\ldots,K$}
    \State Estimate or compute Fourier coefficients $\widehat p(S)$ for $S\in\mathcal{S}_k$.
    \State Form $T_kp=\sum_{S\in\mathcal{S}_k}\widehat p(S)\chi_S$ and $p_k=\Pi_\Delta(T_kp)$.
    \If{$k=0$}
        \State Choose initialization $\theta_k^{(0)}\in\Theta_k$.
    \Else
        \State Set $\theta_k^{(0)}=W_{k-1}(\theta_{k-1})$.
    \EndIf
    \State Run an optimizer on $L_k(\theta)=D(p_k,q_{k,\theta})$ until $L_k(\theta_k)\le \inf_{\theta\in\Theta_k}L_k(\theta)+\varepsilon_k$.
\EndFor
\State \Return $\theta_K$.
\end{algorithmic}
\end{algorithm}

\section{Theoretical Analysis}
\label{sec:theoretical-analysis}
The analysis keeps the three error sources separate. Optimization is controlled locally at each hierarchy level; sampling affects the empirical Fourier truncation; the remaining term is the population truncation error from Section~\ref{sec:hfa}. This separation matters because the three terms arise from different mechanisms and require different hypotheses. The approximation term is fixed by the target spectrum and the chosen hierarchy, the optimization term depends on the circuit family and training procedure, and the statistical term depends on the finite sample used to construct the empirical target. We combine them only after deriving the optimization and statistical bounds.

The optimization argument applies in a neighborhood of a stage minimizer and does not assert global geometry for the variational objective. This is a deliberate limitation where expressive variational quantum objectives are generally nonconvex, and we can't claim that the hierarchy changes their global landscape. The assumption below identifies only the region in which the stage-wise convergence analysis is valid.

\begin{assumption}[Local optimization geometry]\label{ass:geom}
For each $k$, the objective $L_k$ is $\mu_k$-strongly convex and $\beta_k$-smooth on a convex neighborhood $\mathcal{N}_k\subseteq\Theta_k$ containing a minimizer $\theta_k^\star$.
\end{assumption}

Strong convexity supplies a local error-growth condition around $\theta_k^\star$, while smoothness controls the behavior of a gradient step. Together they yield the standard linear contraction used later, but only as long as the iterates remain in $\mathcal N_k$.

A sufficiently accurate solution must also transfer into the neighborhood used at the next level. This is the specific role of the hierarchy in the optimization analysis where the previous stage is useful only if its output is a valid initialization for the local model at the finer stage.

\begin{assumption}[Warm-start compatibility]\label{ass:warm}
For each $k<K$,
\[
W_k(\theta_k)\in\mathcal{N}_{k+1}
\]
whenever
\[
L_k(\theta_k)-L_k(\theta_k^\star)\le \varepsilon_k.
\]
\end{assumption}

Warm-start compatibility is therefore not a statement that consecutive objectives are globally similar. It asserts only that once stage $k$ is solved to tolerance $\varepsilon_k$, the lifted parameter lies inside $\mathcal N_{k+1}$. The additional trajectory condition in the lemma ensures that subsequent gradient steps do not leave the region where the local inequalities apply.

These assumptions place the warm-started trajectory in the region where smoothness and strong convexity give a linear contraction. The next lemma records that contraction explicitly.

\begin{lemma}[Warm-start optimization]\label{lem:warmstart}
Assume local optimization geometry and warm-start compatibility, and assume that the gradient-descent trajectory remains in $\mathcal N_{k+1}$. This is ensured, for example, when $\mathcal N_{k+1}$ contains the sublevel set determined by the warm-started objective value. Gradient descent with step size $1/\beta_{k+1}$ initialized at $W_k(\theta_k)$ satisfies
\[
L_{k+1}(\theta^{(t)})-L_{k+1}(\theta_{k+1}^\star)
\le
\left(1-\frac{\mu_{k+1}}{\beta_{k+1}}\right)^t
\left(
L_{k+1}(W_k(\theta_k))-L_{k+1}(\theta_{k+1}^\star)
\right).
\]
\end{lemma}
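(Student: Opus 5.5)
The plan is to use the standard linear-convergence argument for gradient descent on a smooth, strongly convex function, applied locally on $\mathcal N_{k+1}$. Abbreviate $L=L_{k+1}$, $\mu=\mu_{k+1}$, $\beta=\beta_{k+1}$ and $\theta^\star=\theta_{k+1}^\star$. The iteration is $\theta^{(t+1)}=\theta^{(t)}-\beta^{-1}\nabla L(\theta^{(t)})$ with $\theta^{(0)}=W_k(\theta_k)$.

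First, the iteration starts in the right region. Algorithm~\ref{alg:alg} stops stage $k$ with $L_k(\theta_k)\le L_k(\theta_k^\star)+\varepsilon_k$, so Assumption~\ref{ass:warm} gives $\theta^{(0)}\in\mathcal N_{k+1}$. The trajectory hypothesis in the statement then gives $\theta^{(t)}\in\mathcal N_{k+1}$ for all $t$. Because $\mathcal N_{k+1}$ is convex, each segment $[\theta^{(t)},\theta^{(t+1)}]$ and each segment $[\theta^{(t)},\theta^\star]$ also lies in $\mathcal N_{k+1}$. This matters because the local inequalities from Assumption~\ref{ass:geom} are used along exactly these segments.

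Second, I use two inequalities at each step.

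\emph{Descent.} $\beta$-smoothness on the segment $[\theta^{(t)},\theta^{(t+1)}]$ gives
\[
L(\theta^{(t+1)})\le L(\theta^{(t)})-\tfrac{1}{2\beta}\|\nabla L(\theta^{(t)})\|^2 .
\]

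\emph{Polyak--{\L}ojasiewicz.} Strong convexity on $\mathcal N_{k+1}$ gives
\[
L(\theta^{(t)})-L(\theta^\star)\le \tfrac{1}{2\mu}\|\nabla L(\theta^{(t)})\|^2 .
\]
To derive it, write the strong-convexity lower bound $L(y)\ge L(\theta^{(t)})+\langle\nabla L(\theta^{(t)}),y-\theta^{(t)}\rangle+\tfrac{\mu}{2}\|y-\theta^{(t)}\|^2$. Evaluate it at $y=\theta^\star$, which lies in $\mathcal N_{k+1}$. Then minimise the right-hand side over $y$; this minimum is at most the value at $\theta^\star$, which yields the inequality. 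Here $\theta^\star$ is a minimizer of $L$ over $\mathcal N_{k+1}$ (in fact the unique one), so it is a stationary point of the local problem.

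Combining the two inequalities gives
\[
L(\theta^{(t+1)})-L(\theta^\star)\le\Bigl(1-\frac{\mu}{\beta}\Bigr)\bigl(L(\theta^{(t)})-L(\theta^\star)\bigr).
\]
Induction on $t$ then gives the claimed bound.

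The main obstacle is the bookkeeping of the neighborhood rather than the inequalities themselves. The descent inequality needs smoothness along the whole segment between consecutive iterates, not just at its endpoints. The trajectory hypothesis covers the endpoints, and convexity of $\mathcal N_{k+1}$ covers the segment.

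For the sufficient condition named in the statement, I would argue by induction. Suppose $\theta^{(t)}\in\mathcal N_{k+1}$. Then $\nabla L(\theta^{(t)})$ is well defined. Consider $s\mapsto L(\theta^{(t)}-s\nabla L(\theta^{(t)}))$ for $s\in[0,1/\beta]$. As long as this path stays in $\mathcal N_{k+1}$, smoothness makes the function nonincreasing. So the path stays in the sublevel set $\{L\le L(\theta^{(0)})\}$, which is contained in $\mathcal N_{k+1}$ by assumption. A continuity argument then shows the path cannot leave $\mathcal N_{k+1}$ before $s=1/\beta$. Therefore $\theta^{(t+1)}\in\mathcal N_{k+1}$, which completes the induction.
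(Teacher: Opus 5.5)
Your proof is correct and follows the paper's argument: the descent inequality for step size $1/\beta_{k+1}$, combined with the Polyak--{\L}ojasiewicz inequality, gives the one-step contraction factor $1-\mu_{k+1}/\beta_{k+1}$, and warm-start compatibility plus trajectory containment let you iterate it $t$ times. Your additional details fill in steps the paper delegates to a citation or states without proof: using convexity of $\mathcal N_{k+1}$ to keep the relevant segments inside the neighborhood, deriving PL from local strong convexity, and sketching why sublevel-set containment implies trajectory containment (there the continuity step needs the strict decrease available when $\nabla L(\theta^{(t)})\neq 0$, since a non-strict bound at a boundary point of $\mathcal N_{k+1}$ would not keep the path in the sublevel set).
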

\begin{proof}
Write $L=L_{k+1}$, $\beta=\beta_{k+1}$, $\mu=\mu_{k+1}$, and $\theta^\star=\theta_{k+1}^\star$. For a $\beta$-smooth, $\mu$-strongly convex function, the standard descent and Polyak--{\L}ojasiewicz inequalities give \cite{nesterov2018lectures}
\[
L\left(\theta-\frac1\beta\nabla L(\theta)\right)
\le L(\theta)-\frac1{2\beta}\|\nabla L(\theta)\|_2^2,
\qquad
\|\nabla L(\theta)\|_2^2\ge2\mu\bigl(L(\theta)-L(\theta^\star)\bigr).
\]
Combining them yields the one-step contraction
\[
L(\theta^+)-L(\theta^\star)\le\Bigl(1-\frac\mu\beta\Bigr)\bigl(L(\theta)-L(\theta^\star)\bigr).
\]
Assumption~\ref{ass:warm} places $\theta^{(0)}=W_k(\theta_k)$ in $\mathcal N_{k+1}$ once stage $k$ terminates. The trajectory assumption allows the contraction to be iterated $t$ times, giving
\[
L(\theta^{(t)})-L(\theta^\star)\le\Bigl(1-\frac\mu\beta\Bigr)^t\bigl(L(\theta^{(0)})-L(\theta^\star)\bigr)=\Bigl(1-\frac{\mu_{k+1}}{\beta_{k+1}}\Bigr)^t\bigl(L_{k+1}(W_k(\theta_k))-L_{k+1}(\theta_{k+1}^\star)\bigr).
\]
\end{proof}

The proof follows the standard local gradient-descent argument. Smoothness bounds the objective decrease produced by a step of size $1/\beta_{k+1}$, and strong convexity implies the Polyak--{\L}ojasiewicz inequality relating gradient norm to suboptimality. Combining the two gives a one-step contraction, while warm-start compatibility and trajectory containment justify applying the same contraction repeatedly at the new hierarchy level.

Solving the contraction bound for the iteration count gives the stage-wise optimization complexity. The rate depends on the local condition number $\beta_k/\mu_k$ and only logarithmically on the ratio between the initial suboptimality and the target tolerance.

\begin{theorem}[Optimization complexity]
Under the preceding assumptions, level $k$ reaches optimization error at most $\varepsilon_k$ after
\[
t_k
\ge
\frac{\beta_k}{\mu_k}
\log
\frac{L_k(\theta_k^{(0)})-L_k(\theta_k^\star)}{\varepsilon_k}
\]
gradient steps.
\end{theorem}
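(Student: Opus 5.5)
The plan is to derive the iteration count directly from the linear contraction in Lemma~\ref{lem:warmstart}. Write $\Delta_0 := L_k(\theta_k^{(0)})-L_k(\theta_k^\star)$ and $\rho:=1-\mu_k/\beta_k\in[0,1)$. For $k\ge1$, $\theta_k^{(0)}=W_{k-1}(\theta_{k-1})$, and by Assumption~\ref{ass:warm} this point lies in $\mathcal N_k$. For $k=0$, I will read ``the preceding assumptions'' as requiring the chosen initialization to lie in $\mathcal N_0$, with the same trajectory-containment hypothesis. In either case the lemma (whose proof only uses the geometry on the current level) gives
\[
L_k(\theta^{(t)})-L_k(\theta_k^\star)\le \rho^t\,\Delta_0 .
\]
If $\Delta_0\le\varepsilon_k$ there is nothing to prove, so I will assume $\Delta_0>\varepsilon_k$. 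Then the logarithm in the statement is positive.

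The key step is the elementary inequality $1-x\le e^{-x}$ with $x=\mu_k/\beta_k$. This gives $\rho^t\le \exp(-t\mu_k/\beta_k)$. Next I will check $\mu_k\le\beta_k$, so that $\rho\ge0$. This follows because strong convexity and smoothness on the same neighborhood force $\mu_k\|\cdot\|^2\le$ curvature $\le\beta_k\|\cdot\|^2$, assuming $\mathcal N_k$ has nonempty interior or at least contains a segment. Substituting $t=t_k\ge(\beta_k/\mu_k)\log(\Delta_0/\varepsilon_k)$ gives
\[
\rho^{t_k}\Delta_0\le \exp\!\Bigl(-\log\tfrac{\Delta_0}{\varepsilon_k}\Bigr)\Delta_0=\varepsilon_k,
\]
which is exactly the claimed optimization error at level $k$.

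I expect no real analytic obstacle; the main care point is bookkeeping of hypotheses. The contraction is applied only while iterates stay in $\mathcal N_k$, which is the trajectory assumption. Since the objective is nonincreasing under step size $1/\beta_k$, this follows when $\mathcal N_k$ contains the sublevel set $\{L_k\le L_k(\theta_k^{(0)})\}$. I will also note that $t_k$ must be an integer, so the bound is achieved by $\lceil\cdot\rceil$ of the right-hand side. The requirement that stage $k$ reach tolerance $\varepsilon_k$ is exactly what feeds Assumption~\ref{ass:warm} for the next level. Hence the result can be applied inductively for $k=0,\dots,K$.
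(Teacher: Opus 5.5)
Your proposal is correct and follows essentially the same argument as the paper. Both invoke the linear contraction from Lemma~\ref{lem:warmstart} (or its direct analogue at $k=0$), bound $(1-\mu_k/\beta_k)^t\le e^{-t\mu_k/\beta_k}$, solve for $t$, and treat the case $\Delta_k\le\varepsilon_k$ as trivial; your extra bookkeeping (checking $\mu_k\le\beta_k$ so the power bound is legitimate, requiring the $k=0$ initialization to lie in $\mathcal N_0$, and taking the ceiling for integer $t_k$) only makes explicit points the paper leaves implicit.
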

\begin{proof}
Let $\Delta_k=L_k(\theta_k^{(0)})-L_k(\theta_k^\star)$. Lemma~\ref{lem:warmstart}, or the same contraction applied directly when $k=0$, gives
\[
L_k(\theta^{(t)})-L_k(\theta_k^\star)\le
\Bigl(1-\frac{\mu_k}{\beta_k}\Bigr)^t\Delta_k
\le e^{-t\mu_k/\beta_k}\Delta_k,
\]
where the last step uses $1-a\le e^{-a}$. Solving $e^{-t\mu_k/\beta_k}\Delta_k\le\varepsilon_k$ for $t$ gives
\[
t\ge\frac{\beta_k}{\mu_k}\log\frac{\Delta_k}{\varepsilon_k},
\]
with the bound already satisfied at $t=0$ when $\Delta_k\le\varepsilon_k$.
\end{proof}

This theorem does not establish that an arbitrary optimizer enters the local neighborhood or that the local constants are favorable. It quantifies the number of gradient steps required once the assumptions hold and the stated initialization is available.

In the sample regime, the truncated target is built from Fourier coefficients estimated using independent samples from $p$. A normalization issue is essential here. The coefficient $\widehat p(S)$ is defined using the uniform-measure inner product, but the observations $X_i$ are distributed according to $p$. Consequently, the raw empirical average of $\chi_S(X_i)$ estimates $2^n\widehat p(S)$ rather than $\widehat p(S)$, and the estimator must include the factor $2^{-n}$.

The theorem below quantifies the resulting statistical error. Orthogonality ensures that the squared $\ell_2$ error of the empirical truncation is the sum of the squared coefficient errors, so its expectation is the sum of the corresponding variances.

\begin{theorem}[Statistical estimation of Fourier coefficients]\label{thm:stat}
Let $X_1,\ldots,X_m$ be independent samples from $p\in\Delta_{2^n}$. Define the empirical Fourier coefficient
\[
\widehat{\widetilde p}(S):=\frac{2^{-n}}{m}\sum_{i=1}^m\chi_S(X_i),
\qquad
\widetilde p_k(x)
:=
\sum_{S\in\mathcal{S}_k}
\widehat{\widetilde p}(S)\,\chi_S(x).
\]
Then $\widehat{\widetilde p}(S)$ is an unbiased estimator of $\widehat p(S)$ for every $S\in\mathcal S_k$, and
\[
\E\|T_kp-\widetilde p_k\|_2^2
=
\frac{2^{-2n}}{m}\sum_{S\in\mathcal{S}_k}\operatorname{Var}_{X\sim p}[\chi_S(X)]
\le
\frac{|\mathcal{S}_k|}{2^{2n}m}.
\]
\end{theorem}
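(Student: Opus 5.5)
The plan is to compute the coefficient errors directly, then convert the function-level error into a sum of coefficient errors using orthogonality, exactly as in the proof of Theorem~\ref{thm:parsapprox}.

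\textbf{Step 1 (unbiasedness).} Fix $S\in\mathcal S_k$. Under $X\sim p$ we have
\[
\E_{X\sim p}[\chi_S(X)]=\sum_{x\in\cube}p(x)\chi_S(x)=2^n\cdot 2^{-n}\sum_x p(x)\chi_S(x)=2^n\widehat p(S),
\]
by the definition of the uniform-measure inner product. Linearity of expectation over the $m$ samples then gives $\E\,\widehat{\widetilde p}(S)=2^{-n}\cdot 2^n\widehat p(S)=\widehat p(S)$. This is precisely where the factor $2^{-n}$ in the estimator is needed.

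\textbf{Step 2 (reduction to coefficients).} Subtracting the two expansions gives
\[
T_kp-\widetilde p_k=\sum_{S\in\mathcal S_k}\bigl(\widehat p(S)-\widehat{\widetilde p}(S)\bigr)\chi_S .
\]
This is a function whose Walsh coefficients vanish outside $\mathcal S_k$. Parseval's identity~\eqref{eq:parseval} therefore yields, pointwise in the sample,
\[
\|T_kp-\widetilde p_k\|_2^2=\sum_{S\in\mathcal S_k}\bigl(\widehat p(S)-\widehat{\widetilde p}(S)\bigr)^2 .
\]
Taking expectations and using Step~1, each summand has expectation $\operatorname{Var}(\widehat{\widetilde p}(S))$.

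\textbf{Step 3 (variance and bound).} The $X_i$ are i.i.d., so
\[
\operatorname{Var}(\widehat{\widetilde p}(S))=\frac{2^{-2n}}{m^2}\cdot m\operatorname{Var}_{X\sim p}[\chi_S(X)]=\frac{2^{-2n}}{m}\operatorname{Var}_{X\sim p}[\chi_S(X)].
\]
Summing over $S\in\mathcal S_k$ gives the stated identity. For the inequality, $\chi_S$ takes values in $\{-1,1\}$, so
\[
\operatorname{Var}_{X\sim p}[\chi_S(X)]\le\E[\chi_S(X)^2]=1,
\]
and the sum is at most $|\mathcal S_k|$. (For $S=\emptyset$ the variance is in fact zero, which only improves the bound.)

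No step is genuinely hard. The point requiring care is keeping the two normalizations distinct: Parseval is taken with respect to the uniform measure, while expectations and variances are taken with respect to $p$. Mixing them would produce a spurious $2^n$ in the bias or the variance.
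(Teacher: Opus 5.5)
Your proposal is correct and follows the paper's proof. Both arguments get unbiasedness from $\E_{X\sim p}[\chi_S(X)]=2^n\widehat p(S)$, apply Parseval pathwise to the coefficient differences, use the $1/m$ variance of an independent sample mean, and finish with $\operatorname{Var}[\chi_S(X)]\le 1$. Your remark that the $S=\varnothing$ term has zero variance is a harmless refinement that the paper does not state.
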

\begin{proof}
Since $X_i\sim p$ and $\widehat p(S)=2^{-n}\sum_xp(x)\chi_S(x)$,
\[
\E[\chi_S(X_i)]=\sum_xp(x)\chi_S(x)=2^n\widehat p(S),
\qquad
\E[\widehat{\widetilde p}(S)]=\widehat p(S).
\]
Independence gives
\[
\operatorname{Var}(\widehat{\widetilde p}(S))=\frac{2^{-2n}}{m^2}\sum_{i=1}^m\operatorname{Var}[\chi_S(X_i)]=\frac{2^{-2n}}{m}\operatorname{Var}_{X\sim p}[\chi_S(X)].
\]
Parseval's identity~\eqref{eq:parseval} and unbiasedness then yield
\[
\E\|T_kp-\widetilde p_k\|_2^2=\sum_{S\in\mathcal S_k}\operatorname{Var}(\widehat{\widetilde p}(S))=\frac{2^{-2n}}{m}\sum_{S\in\mathcal S_k}\operatorname{Var}_{X\sim p}[\chi_S(X)]\le\frac{2^{-2n}|\mathcal S_k|}{m}.
\]
The final inequality uses $\chi_S(X)\in\{-1,1\}$ and hence $\operatorname{Var}[\chi_S(X)]\le1$.
\end{proof}

The proof separates into the normalization, variance, and Parseval steps. The $2^{-n}$ factor makes each retained coefficient unbiased; independence gives the usual $1/m$ variance reduction; and Parseval converts the coefficientwise variances into expected squared function error. Under the present normalization, the upper bound scales linearly with the number $|\mathcal S_k|$ of retained coefficients and inversely with the sample size $m$.

The same estimate controls the projected empirical target used by the algorithm. Indeed, nonexpansiveness of simplex projection gives
\[
\|p_k-\widehat p_k\|_2
=
\|\Pi_\Delta(T_kp)-\Pi_\Delta(\widetilde p_k)\|_2
\le
\|T_kp-\widetilde p_k\|_2.
\]
Consequently,
\[
\E\|p_k-\widehat p_k\|_2^2
\le
\frac{|\mathcal S_k|}{2^{2n}m},
\qquad
\E\|p_k-\widehat p_k\|_2
\le
\frac{\sqrt{|\mathcal S_k|}}{2^n\sqrt m}.
\]
These are the projected statistical quantities used in Theorem~\ref{thm:overall}.

For a target statistical accuracy $\varepsilon_k^{\mathrm{stat}}>0$, the sufficient sample size
\[
m\ge\frac{|\mathcal S_k|}{2^{2n}\,(\varepsilon_k^{\mathrm{stat}})^2},
\]
implies $\E\|T_kp-\widetilde p_k\|_2^2\le(\varepsilon_k^{\mathrm{stat}})^2$ and, by Jensen's inequality,
\[
\E\|T_kp-\widetilde p_k\|_2\le\bigl(\E\|T_kp-\widetilde p_k\|_2^2\bigr)^{1/2}\le\varepsilon_k^{\mathrm{stat}}.
\]
This is a sufficient expectation bound, not a high-probability statement. It describes how many samples are enough to control the retained empirical Fourier expansion in mean normalized $\ell_2$ error.

\begin{figure}
    \centering
    \includegraphics[width=0.75\linewidth]{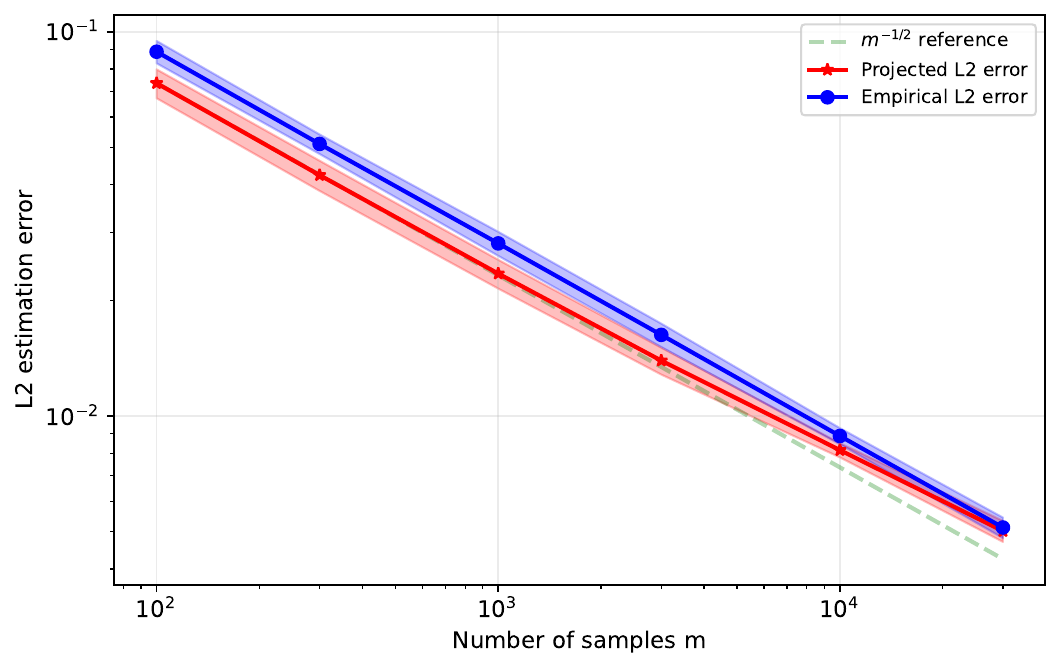}
    \caption{
Mean normalized $\ell_2$ estimation error is shown as a function of the sample size $m$ on logarithmic axes, with $n$, $K$, $k$, $\alpha$, and $\delta$ held fixed. We compare the raw truncation error, $\|T_kp-\widetilde p_k\|_2$, with the error after projection onto the probability simplex, $\|T_kp-\Pi_\Delta(\widetilde p_k)\|_2$. shaded intervals denote confidence intervals over independent data samples. The $m^{-1/2}$ reference line highlights the expected Monte Carlo scaling.
    }
    \label{fig:experimental-estimation}
\end{figure}

The final decomposition requires the stage discrepancy to compare approximation, sampling, and training errors through a common quantity. The $\ell_2$-domination property converts the spectral and empirical truncation errors into $D$, while the triangle inequality allows intermediate distributions to be inserted between the true target and the circuit output.

\begin{assumption}[Discrepancy regularity]\label{ass:discrepancy}
The discrepancy $D:\Delta_{2^n}\times\Delta_{2^n}\to\R_{\ge0}$ used in the stage objectives satisfies, for some constant $C_D<\infty$ and all $a,b,c\in\Delta_{2^n}$:
\begin{enumerate}
\item[(i)] ($\ell_2$-domination) $D(a,b)\le C_D\|a-b\|_2$;
\item[(ii)] (triangle inequality) $D(a,c)\le D(a,b)+D(b,c)$.
\end{enumerate}
\end{assumption}

This assumption is not intended to cover every discrepancy used in quantum machine learning. In particular, Section~\ref{sec:qdl} verifies it for total variation distance with an explicit constant. Other objectives, such as KL divergence, may require a different formulation because the same triangle-inequality argument is unavailable.

The population truncation $p_K$, the empirical truncation $\widehat p_K$, and the learned circuit distribution $q_{K,\widehat\theta}$ are distinct objects. The population target $p_K$ records only spectral approximation; $\widehat p_K$ adds the error from estimating retained coefficients; and $q_{K,\widehat\theta}$ adds the effects of finite optimization and limited ansatz expressivity. Comparing them in that order exposes the three contributions.

The next theorem is the central synthesis of the paper. It inserts $p_K$ and $\widehat p_K$ as intermediate distributions between $p$ and the learned model, then assigns each comparison to one error term.

\begin{theorem}[Overall learning guarantee]\label{thm:overall}
Let $p\in\Delta_{2^n}$. Let $p_K=\Pi_\Delta(T_Kp)$ denote the population-level Fourier truncation at the top hierarchy level, and let $\widehat p_K=\Pi_\Delta(\widetilde p_K)$ denote its empirical counterpart built from $m$ i.i.d.\ samples from $p$ as in Theorem~\ref{thm:stat}. Let $\widehat\theta\in\Theta_K$ satisfy the near-optimality guarantee
\[
D(\widehat p_K,q_{K,\widehat\theta})\le\inf_{\theta\in\Theta_K}D(\widehat p_K,q_{K,\theta})+\varepsilon_{\mathrm{opt}}.
\]
If $D$ satisfies the discrepancy regularity assumption (Assumption~\ref{ass:discrepancy}) with constant $C_D$, then
\[
\E\bigl[D(p,q_{K,\widehat\theta})\bigr]
\le
E_{\mathrm{approx}}
+
E_{\mathrm{optimization}}
+
E_{\mathrm{statistical}},
\]
where
\[
E_{\mathrm{approx}}=C_D\|p-p_K\|_2,
\qquad
E_{\mathrm{statistical}}=C_D\,\E\|p_K-\widehat p_K\|_2,
\qquad
E_{\mathrm{optimization}}=\varepsilon_{\mathrm{opt}}+\E\Big[\inf_{\theta\in\Theta_K}D(\widehat p_K,q_{K,\theta})\Big].
\]
In particular, if the ansatz at level $K$ is expressive enough to exactly represent $\widehat p_K$ for some $\theta\in\Theta_K$ (e.g.\ a universal QCBM family), then $\inf_\theta D(\widehat p_K,q_{K,\theta})=0$ for every realization of the sample, hence $\E[\inf_\theta D(\widehat p_K,q_{K,\theta})]=0$ and $E_{\mathrm{optimization}}=\varepsilon_{\mathrm{opt}}$, recovering exactly the boxed decomposition of the Introduction.
\end{theorem}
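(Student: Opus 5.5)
The plan is to fix a realization of the sample, prove a pointwise (deterministic) inequality, and only then take expectations. For fixed samples, $p$, $p_K$, $\widehat p_K$ and $q_{K,\widehat\theta}$ all lie in $\Delta_{2^n}$: the first by hypothesis, the two projected truncations because $\Pi_\Delta$ maps into the simplex, and the QCBM output because Born probabilities sum to one. Hence Assumption~\ref{ass:discrepancy} applies to every triple drawn from these four distributions. I will apply the triangle inequality (ii) twice, inserting $p_K$ and then $\widehat p_K$:
\[
D(p,q_{K,\widehat\theta})\le D(p,p_K)+D(p_K,\widehat p_K)+D(\widehat p_K,q_{K,\widehat\theta}).
\]

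Next I bound each term. By $\ell_2$-domination (i), $D(p,p_K)\le C_D\|p-p_K\|_2$, which is deterministic. Likewise $D(p_K,\widehat p_K)\le C_D\|p_K-\widehat p_K\|_2$. For the last term, the near-optimality hypothesis on $\widehat\theta$ gives $D(\widehat p_K,q_{K,\widehat\theta})\le \inf_{\theta}D(\widehat p_K,q_{K,\theta})+\varepsilon_{\mathrm{opt}}$. Adding the three bounds yields a pointwise inequality whose right-hand side is $C_D\|p-p_K\|_2+C_D\|p_K-\widehat p_K\|_2+\varepsilon_{\mathrm{opt}}+\inf_\theta D(\widehat p_K,q_{K,\theta})$.

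Taking expectations over the sample and using linearity and monotonicity then gives exactly $E_{\mathrm{approx}}+E_{\mathrm{statistical}}+E_{\mathrm{optimization}}$. The main point needing care is that these expectations are well defined, i.e.\ that the random quantities are measurable and integrable. Because the sample space $(\cube)^m$ is finite, every function of the sample is a finite-valued random variable, so all expectations are simple finite averages. Nonnegativity of $D$ ensures the infimum lies in $[0,\infty)$, and $\widehat\theta$ may be any measurable selection satisfying the stated guarantee. If desired, the bound $\E\|p_K-\widehat p_K\|_2\le\sqrt{|\mathcal S_K|}/(2^n\sqrt m)$, which follows from Theorem~\ref{thm:stat} and nonexpansiveness, makes $E_{\mathrm{statistical}}$ explicit, and Theorem~\ref{thm:parsapprox} with Lemma~\ref{lem:projstab} bounds $E_{\mathrm{approx}}$ by the omitted Fourier mass. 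Neither bound is needed for the stated inequality.

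For the realizability clause, assume that for every sample realization some $\theta\in\Theta_K$ has $q_{K,\theta}=\widehat p_K$. Then $D(\widehat p_K,q_{K,\theta})=D(\widehat p_K,\widehat p_K)$. From (i), $D(a,a)\le C_D\cdot 0=0$, and with $D\ge0$ this forces $D(a,a)=0$. So the infimum is $0$ for every realization, its expectation vanishes, and $E_{\mathrm{optimization}}=\varepsilon_{\mathrm{opt}}$. I expect no real obstacle. The only subtle step is recognizing that $D(a,a)=0$ follows from (i) rather than needing to be assumed separately.
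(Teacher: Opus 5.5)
Your proposal is correct and follows the same route as the paper's proof. Both insert $p_K$ and $\widehat p_K$ via the triangle inequality, bound the first two terms by $\ell_2$-domination and the third by the pathwise near-optimality condition, and then take expectations. Your extra remarks fill in details the paper leaves implicit: the finite sample space makes all expectations well defined, and $D(a,a)=0$ follows from (i) together with $D\ge 0$.
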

\begin{proof}
Apply the triangle inequality, part (ii) of Assumption~\ref{ass:discrepancy}, twice using the intermediate distributions $p_K$ and $\widehat p_K$:
\[
D(p,q_{K,\widehat\theta})\le D(p,p_K)+D(p_K,\widehat p_K)+D(\widehat p_K,q_{K,\widehat\theta}).
\]
By the $\ell_2$-domination property in part (i), $D(p,p_K)\le C_D\|p-p_K\|_2=E_{\mathrm{approx}}$. This term is deterministic and depends only on the population distribution $p$. Similarly, $D(p_K,\widehat p_K)\le C_D\|p_K-\widehat p_K\|_2$. Taking expectations over the sampling randomness defining $\widehat p_K$ gives $\E[D(p_K,\widehat p_K)]\le C_D\,\E\|p_K-\widehat p_K\|_2=E_{\mathrm{statistical}}$. The near-optimality hypothesis holds pathwise for every sample realization:
\[
D(\widehat p_K,q_{K,\widehat\theta})\le\inf_{\theta\in\Theta_K} D(\widehat p_K,q_{K,\theta})+\varepsilon_{\mathrm{opt}}.
\]
Taking expectations of this pointwise inequality,
\[
\E\big[D(\widehat p_K,q_{K,\widehat\theta})\big]\le\E\Big[\inf_{\theta\in\Theta_K} D(\widehat p_K,q_{K,\theta})\Big]+\varepsilon_{\mathrm{opt}}=E_{\mathrm{optimization}}.
\]
Combining the optimization bound with the deterministic approximation bound $D(p,p_K)\le E_{\mathrm{approx}}$ and the statistical bound $\E[D(p_K,\widehat p_K)]\le E_{\mathrm{statistical}}$ gives
\[
\E\bigl[D(p,q_{K,\widehat\theta})\bigr]\le E_{\mathrm{approx}}+E_{\mathrm{statistical}}+E_{\mathrm{optimization}},
\]
as claimed.
\end{proof}

The proof is a two-step triangle-inequality argument. The first comparison, from $p$ to $p_K$, is deterministic and measures the population approximation error. The second, from $p_K$ to $\widehat p_K$, is random and measures the sampling error. The final comparison is controlled by the near-optimality condition and retains the infimum over the ansatz family. This expressivity term is essential and disappears only under the additional realizability condition stated in the theorem.

\paragraph{Computational cost.}
The hierarchy also determines the number of Fourier coefficients that must be estimated and stored. For the degree hierarchy, this number is the count of subsets of size at most $k$.

For the degree hierarchy $\mathcal{S}_k=\{S:|S|\le k\}$, counting subsets by cardinality gives
\(
|\mathcal{S}_k|=\sum_{j=0}^k \binom{n}{j}.
\)
If evaluating $\chi_S(X)$ is charged according to the number $|S|$ of factors in the parity, coefficient estimation at level $k$ costs
\(
O\left(m_k\sum_{S\in\mathcal S_k}|S|\right).
\)
For the degree hierarchy this is at most $O(km_k|\mathcal S_k|)$. Thus, if optimization uses $t_k$ iterations with per-iteration circuit cost $C_k$, a direct implementation has total training cost
\[
O\left(
\sum_{k=0}^K
\left(
k m_k|\mathcal{S}_k|+t_kC_k
\right)
\right),
\]
and coefficient storage is
\[
O\left(\max_{0\le k\le K}|\mathcal{S}_k|\right).
\]
The first term counts direct parity evaluation without assuming constant-time access to every $\chi_S(X)$, the second counts the variational iterations, and nested coefficient tables can be updated incrementally so that only the largest table must be stored. If parity values are precomputed or maintained incrementally, the coefficient-evaluation term may be reduced to $O(m_k|\mathcal S_k|)$.

This bookkeeping bound does not assert practical scalability. Both $|\mathcal S_k|$ and the circuit cost $C_k$ may grow rapidly, and the optimization iteration count depends on local conditioning. Its purpose is to make the separate estimation and variational contributions explicit.

\section{Quantum Distribution Loading}
\label{sec:qdl}

The projected truncation $p_k$ is a valid classical distribution and therefore defines an amplitude-encoded quantum state. At hierarchy level $k$, this state is the quantum-loading target associated with the currently retained Fourier information:
\[
|\psi_k\rangle
=
\sum_{x\in\cube}\sqrt{p_k(x)}|x\rangle,
\]
provided $p_k$ is a valid distribution. The QCBM prepares a state with computational-basis measurement distribution $q_k$, which is fitted to $p_k$. The original distribution $p$ remains the final target; $p_k$ is the spectral approximation used at this stage.

Fitting $q_k$ to $p_k$ controls only the stage-level loading problem. It does not remove the discrepancy between the spectral target $p_k$ and the original distribution $p$. The triangle inequality for total variation separates these two effects into the Fourier approximation term between $p$ and $p_k$ and the loading term between $p_k$ and $q_k$.

\begin{corollary}
If a circuit prepares $|\phi_k\rangle$ whose measurement distribution is $q_k$ and $\TV(p_k,q_k)\le\varepsilon_k$, then
\[
\TV(p,q_k)
\le
\TV(p,p_k)+\varepsilon_k
\le
2^{n-1}\|p-p_k\|_2+\varepsilon_k.
\]
\end{corollary}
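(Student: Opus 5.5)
The argument has two steps: a triangle inequality for total variation, followed by the $\ell_1$--$\ell_2$ comparison of Proposition~\ref{prop:tvstab}. All three of $p$, $p_k$, and $q_k$ lie in $\Delta_{2^n}$. Indeed, $p$ is a distribution by hypothesis, $p_k=\Pi_\Delta(T_kp)$ belongs to the simplex by construction of the Euclidean projection, and $q_k$ is a Born-rule measurement distribution of the normalized state $|\phi_k\rangle$. On this set, $\TV(a,b)=\frac12\|a-b\|_1$ is half of a norm distance, so it satisfies the triangle inequality:
\[
\TV(p,q_k)\le \TV(p,p_k)+\TV(p_k,q_k)\le \TV(p,p_k)+\varepsilon_k .
\]
The last step uses the hypothesis $\TV(p_k,q_k)\le\varepsilon_k$. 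This gives the first inequality of the corollary.

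For the second inequality, I will apply Proposition~\ref{prop:tvstab} with $f=p$ and $g=p_k$. Since $\TV(p,p_k)=\frac12\|p-p_k\|_1$, the proposition yields $\TV(p,p_k)\le 2^{n-1}\|p-p_k\|_2$. Adding $\varepsilon_k$ to both sides gives the stated bound.

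None of these steps is a real obstacle. The only point that needs care is to apply total variation to the projected target $p_k$ and not to $T_kp$. The truncation $T_kp$ need not be nonnegative, while $p_k$ is a genuine probability vector. This ensures that $\TV(p,p_k)$ has its intended meaning as a distance between distributions.

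As an optional remark, Lemma~\ref{lem:projstab} further bounds $\|p-p_k\|_2$ by $\|p-T_kp\|_2$. By Theorem~\ref{thm:parsapprox}, the latter equals $\bigl(\sum_{S\notin\mathcal S_k}\widehat p(S)^2\bigr)^{1/2}$. This expresses the approximation term of the corollary through the omitted Fourier mass.
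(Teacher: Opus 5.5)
Your proof is correct and matches the paper's: the triangle inequality for total variation gives the first inequality, and Proposition~\ref{prop:tvstab} applied to $p$ and $p_k$ gives the second. Your remarks that $p_k$ lies in the simplex, and the optional link to the omitted Fourier mass via Lemma~\ref{lem:projstab} and Theorem~\ref{thm:parsapprox}, are accurate but not needed for the argument.
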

\begin{proof}
The first inequality is the triangle inequality for total variation, and the second is Proposition~\ref{prop:tvstab} applied to $p$ and $p_k$.
\end{proof}

The corollary makes the role of the hierarchy explicit in the loading task. Even perfect preparation of the level-$k$ target would leave the approximation error $\TV(p,p_k)$. Conversely, improving the Fourier level does not control the circuit error $\TV(p_k,q_k)$ unless the variational stage is also trained accurately.

Assumption~\ref{ass:discrepancy}, used in Theorem~\ref{thm:overall}, asks for $\ell_2$-domination and a triangle inequality. Total variation has both properties, so the abstract bound specializes to the loading problem. The specialization is exact at the level of the assumptions, but it introduces the dimension-dependent constant $C_D=2^{n-1}$ inherited from the normalized $\ell_2$ comparison.

The next statement therefore serves two purposes. It verifies that total variation is a legitimate choice for the abstract discrepancy $D$, and it displays the resulting approximation and statistical terms without suppressing their dependence on dimension.

\begin{remark}
\label{thm:realize}
The discrepancy $D=\TV$ satisfies Assumption~\ref{ass:discrepancy} on $\Delta_{2^n}$ with constant $C_D=2^{n-1}$, for all $a,b,c\in\Delta_{2^n}$,
\[
\TV(a,b)\le2^{n-1}\|a-b\|_2
\qquad\text{and}\qquad
\TV(a,c)\le\TV(a,b)+\TV(b,c).
\]
Consequently, taking $D=\TV$ in Theorem~\ref{thm:overall} is legitimate with no further hypotheses on $D$, and the overall learning bound for the quantum-loading pipeline of this section takes the fully explicit form
\[
\E\bigl[\TV(p,q_{K,\widehat\theta})\bigr]
\le
2^{n-1}\|p-p_K\|_2
+
2^{n-1}\,\E\|p_K-\widehat p_K\|_2
+
E_{\mathrm{optimization}}.
\]
The $\ell_2$-domination bound is Proposition~\ref{prop:tvstab}, and total variation satisfies the standard triangle inequality. Substituting $C_D=2^{n-1}$ into Theorem~\ref{thm:overall} gives the displayed bound.
\end{remark}

No additional learning argument is required in this statement and the result is an instantiation of the abstract decomposition. Its importance can interpreted as it connects the error terms derived in normalized $\ell_2$ to a distributional discrepancy relevant for computational-basis measurement outcomes, while also exposing the exponential factor that limits the usefulness of the bound.

The preceding theorem concerns measurement distributions. At the state level, the target and prepared QCBM states can instead be compared by fidelity. Computational-basis measurement is a quantum channel, so data processing transfers a trace-distance bound between states to a total variation bound between their measurement distributions.

\begin{proposition}
\label{prop:fidelity}
Let $|\psi_k\rangle=\sum_x\sqrt{p_k(x)}|x\rangle$ and $|\phi_k\rangle=U_k(\theta)|0^n\rangle$ be the target and prepared amplitude-encoded states, with respective computational-basis measurement distributions $p_k$ and $q_k$. Then
\[
\TV(p_k,q_k)\le\sqrt{1-F(\psi_k,\phi_k)^2},
\qquad
F(\psi_k,\phi_k):=|\langle\psi_k|\phi_k\rangle|.
\]
\end{proposition}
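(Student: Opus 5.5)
The argument reduces the total variation between the two measurement distributions to the trace distance between the corresponding pure states, and then evaluates that trace distance exactly in terms of the overlap. Both steps are standard, so the proof will be short.

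\textbf{Step 1: measurement contracts trace distance.} Let $\rho=|\psi_k\rangle\langle\psi_k|$ and $\sigma=|\phi_k\rangle\langle\phi_k|$. Computational-basis measurement is the quantum-to-classical channel
\[
\mathcal M(\tau)=\sum_x\langle x|\tau|x\rangle\,|x\rangle\langle x|.
\]
By the Born rule, $\mathcal M(\rho)$ and $\mathcal M(\sigma)$ are the diagonal matrices with entries $p_k(x)$ and $q_k(x)$. For diagonal operators the trace distance $\frac12\|\cdot\|_1$ equals total variation, so
\[
\TV(p_k,q_k)=\tfrac12\|\mathcal M(\rho)-\mathcal M(\sigma)\|_1\le\tfrac12\|\rho-\sigma\|_1 .
\]
The inequality is contractivity of trace distance under CPTP maps. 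For a self-contained argument, write the Hermitian operator $\rho-\sigma$ as $P-Q$ with $P,Q\ge0$ and $\operatorname{tr}P+\operatorname{tr}Q=\|\rho-\sigma\|_1$. Then
\[
\sum_x|\langle x|(\rho-\sigma)|x\rangle|\le\sum_x\bigl(\langle x|P|x\rangle+\langle x|Q|x\rangle\bigr)=\|\rho-\sigma\|_1 .
\]

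\textbf{Step 2: trace distance of two pure states.} I will show that $\frac12\|\rho-\sigma\|_1=\sqrt{1-|\langle\psi_k|\phi_k\rangle|^2}$.

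1. The operator $\rho-\sigma$ has rank at most two and is supported on $\operatorname{span}\{|\psi_k\rangle,|\phi_k\rangle\}$. If the two states are parallel, both sides are zero, so assume they are not.
2. Its trace is zero, so its nonzero eigenvalues are $\pm\lambda$ for some $\lambda\ge0$.
3. Compute
\[
\operatorname{tr}(\rho-\sigma)^2=\operatorname{tr}\rho^2+\operatorname{tr}\sigma^2-2\operatorname{tr}(\rho\sigma)=2-2|\langle\psi_k|\phi_k\rangle|^2 .
\]
4. Since this trace also equals $2\lambda^2$, we get $\lambda=\sqrt{1-F^2}$.
5. Hence $\frac12\|\rho-\sigma\|_1=\frac12(\lambda+\lambda)=\lambda=\sqrt{1-F(\psi_k,\phi_k)^2}$.

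Chaining Step 1 and Step 2 gives $\TV(p_k,q_k)\le\sqrt{1-F(\psi_k,\phi_k)^2}$, which is the claim.

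The point that needs the most care is Step 1. The paper has so far used $\TV$ and $\ell_2$ only for classical distributions, so the identification of $\TV(p_k,q_k)$ with the trace distance of the dephased states should be stated explicitly. The monotonicity step should either be quoted as the data-processing inequality, consistent with the paper's remark that measurement is a quantum channel, or proved directly with the positive/negative-part decomposition above.

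Two further remarks. The proof uses only the Born rule $p_k(x)=|\langle x|\psi_k\rangle|^2$ and $q_k(x)=|\langle x|\phi_k\rangle|^2$; it does not use the specific amplitude encoding or any property of $U_k(\theta)$. Also, $p_k$ is a valid distribution by construction via $\Pi_\Delta$, so $|\psi_k\rangle$ is a unit vector and $\operatorname{tr}\rho^2=1$ in Step 2.
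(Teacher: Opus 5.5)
Your proposal is correct and follows the same route as the paper: contractivity of trace distance under computational-basis measurement, combined with the pure-state identity $\tfrac12\|\rho-\sigma\|_1=\sqrt{1-F(\psi_k,\phi_k)^2}$. The only difference is that the paper cites both facts from Watrous, whereas you prove them directly, via the positive/negative-part decomposition and the rank-two, trace-zero eigenvalue computation; both self-contained arguments are sound.
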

\begin{proof}
Trace distance is contractive under computational-basis measurement, and for pure states it equals $\sqrt{1-F(\psi_k,\phi_k)^2}$ \cite{watrous2018theory}. Applying these standard facts to $|\psi_k\rangle\langle\psi_k|$ and $|\phi_k\rangle\langle\phi_k|$ gives the claim.
\end{proof}

The proof uses only standard quantum-information inequalities. Contractivity under measurement and the pure-state relation between trace distance and fidelity. Thus a small fidelity loss controls $\TV(p_k,q_k)$; the loading corollary then controls $\TV(p,q_k)$, and Theorem~\ref{thm:realize} inserts the same discrepancy into the overall three-term bound. Fidelity control does not remove the Fourier approximation term, but it supplies a state-level route to controlling the variational loading term.

The approximation term $E_{\mathrm{approx}} = C_D \|p - p_K\|_2$ is small only when the omitted Fourier mass is small. Fourier concentration is therefore an assumption on the target family, not a universal property of distributions on the Boolean cube. The definition below isolates the targets for which the chosen hierarchy retains enough spectral mass to make the approximation theorem informative.

\begin{definition}[Fourier-concentrated distributions]
For $\epsilon>0$, let $\mathcal{P}_{K, \epsilon}$ be the family of \emph{$(\epsilon, K)$-concentrated distributions} $p \in \Delta_{2^n}$ satisfying
\[
\sum_{S \notin \mathcal{S}_K} \widehat p(S)^2 \le \left(\frac{\epsilon}{C_D}\right)^2.
\]
\end{definition}
By Theorem~\ref{thm:parsapprox}, every $p \in \mathcal{P}_{K, \epsilon}$ satisfies the hierarchical approximation bound $E_{\mathrm{approx}} \le \epsilon$. In other words, the concentration condition is chosen precisely so that the omitted Fourier energy contributes at most $\epsilon$ after conversion through the discrepancy constant $C_D$.

The explicit example used in the appendix is a product distribution with independent biased bits. For such a target, the coefficient indexed by $S$ is the product of the corresponding single-bit biases, together with the uniform normalization: $\widehat{p}(S) = 2^{-n} \prod_{i \in S} m_i$. When the biases are bounded away from one, multiplying more of them suppresses higher-order coefficients, giving exponential decay with $|S|$. On $\mathcal{P}_{K, \epsilon}$, Theorem~\ref{thm:overall} converts this assumed spectral concentration into the stated end-to-end learning guarantee.

\section{Discussion}

We introduced a hierarchy of variational learning problems obtained by retaining progressively larger sets of Walsh--Fourier coefficients of the target distribution. The main guarantee separates the final error into Fourier approximation, statistical estimation, and optimization terms, with ansatz expressivity included explicitly in the optimization term. The approximation error is determined by the omitted Fourier mass, the statistical error depends on the number of retained coefficients and the sample size, and the optimization bound follows under the stated local geometry and warm-start conditions. For quantum distribution loading, total variation provides a concrete instance of the abstract discrepancy assumption, while Proposition~\ref{prop:fidelity} relates the distributional error to state fidelity.

These results are conditional. The analysis does not establish global convergence, rule out barren plateaus, or guarantee that an arbitrary QCBM enters the required local neighborhood. It also does not assume that the circuit can represent the empirical target exactly; any failure of the ansatz to do so remains in the term $\inf_{\theta\in\Theta_K}D(\widehat p_K,q_{K,\theta})$. The method is therefore most useful for target families whose Fourier mass is concentrated on the retained sets and for circuit families whose local optimization geometry is compatible with the warm-start map.

The total variation specialization has an additional limitation. Under the normalized $\ell_2$ convention used here, the conversion constant is $2^{n-1}$. A small Fourier-tail error in $\ell_2$ may therefore give a weak or vacuous total variation bound when $n$ is large. The Parseval approximation result remains exact, but a useful distributional guarantee requires the approximation and statistical errors to be small enough to offset this factor.

Several questions remain open. One is to identify circuit families whose expressivity can be related directly to the retained Fourier support. Another is to replace the local convexity assumptions by conditions suited to nonconvex variational objectives. Adaptive choices of the hierarchy and extensions to discrepancies such as maximum mean discrepancy or KL divergence also require separate analyses, since these losses need not satisfy the regularity assumptions used in Theorem~\ref{thm:overall}.

\bibliographystyle{alpha}
\bibliography{references}

@book{odonnell2014analysis,
  author    = {Ryan O'Donnell},
  title     = {Analysis of Boolean Functions},
  publisher = {Cambridge University Press},
  year      = {2014}
}

@book{nesterov2018lectures,
  author    = {Yurii Nesterov},
  title     = {Lectures on Convex Optimization},
  publisher = {Springer},
  year      = {2018}
}

@book{bauschke2017convex,
  author    = {Heinz H. Bauschke and Patrick L. Combettes},
  title     = {Convex Analysis and Monotone Operator Theory in Hilbert Spaces},
  edition   = {2},
  publisher = {Springer},
  year      = {2017}
}

@book{watrous2018theory,
  author    = {John Watrous},
  title     = {The Theory of Quantum Information},
  publisher = {Cambridge University Press},
  year      = {2018}
}

@article{gharibyan2023hierarchical,
  author  = {Hrant Gharibyan and Vincent Su and Hayk Tepanyan},
  title   = {Hierarchical Learning for Quantum {ML}: Novel Training Technique for Large-Scale Variational Quantum Circuits},
  journal = {arXiv preprint arXiv:2311.12929},
  year    = {2023}
}

@article{liu2018differentiable,
  author  = {Jin-Guo Liu and Lei Wang},
  title   = {Differentiable Learning of Quantum Circuit Born Machine},
  journal = {Physical Review A},
  volume  = {98},
  number  = {6},
  pages   = {062324},
  year    = {2018}
}

@article{benedetti2019generative,
  author  = {Marcello Benedetti and Delfina Garcia-Pintos and Oscar Perdomo and Vicente Leyton-Ortega and Yunseong Nam and Alejandro Perdomo-Ortiz},
  title   = {A Generative Modeling Approach for Benchmarking and Training Shallow Quantum Circuits},
  journal = {npj Quantum Information},
  volume  = {5},
  pages   = {45},
  year    = {2019}
}

@article{rudolph2023trainability,
  author  = {Manuel S. Rudolph and Sacha Lerch and Supanut Thanasilp and Oriel Kiss and Sofia Vallecorsa and Michele Grossi and Zo{\"e} Holmes},
  title   = {Trainability Barriers and Opportunities in Quantum Generative Modeling},
  journal = {arXiv preprint arXiv:2305.02881},
  year    = {2023}
}

@article{skolik2021layerwise,
  author  = {Andrea Skolik and Jarrod R. McClean and Masoud Mohseni and Patrick van der Smagt and Martin Leib},
  title   = {Layerwise Learning for Quantum Neural Networks},
  journal = {Quantum Machine Intelligence},
  volume  = {3},
  pages   = {5},
  year    = {2021}
}

@article{puig2024warm,
  author  = {Ricard Puig and Marc Drudis and Supanut Thanasilp and Zo{\"e} Holmes},
  title   = {Variational Quantum Simulation: A Case Study for Understanding Warm Starts},
  journal = {arXiv preprint arXiv:2404.10044},
  year    = {2024}
}

@article{grant2019initialization,
  author  = {Edward Grant and Leonard Wossnig and Mateusz Ostaszewski and Marcello Benedetti},
  title   = {An Initialization Strategy for Addressing Barren Plateaus in Parametrized Quantum Circuits},
  journal = {Quantum},
  volume  = {3},
  pages   = {214},
  year    = {2019}
}

@inproceedings{rahaman2019spectral,
  author    = {Nasim Rahaman and Aristide Baratin and Devansh Arpit and Felix Draxler and Min Lin and Fred A. Hamprecht and Yoshua Bengio and Aaron Courville},
  title     = {On the Spectral Bias of Neural Networks},
  booktitle = {Proceedings of the 36th International Conference on Machine Learning},
  year      = {2019}
}

@article{schuld2021effect,
  author  = {Maria Schuld and Ryan Sweke and Johannes Jakob Meyer},
  title   = {Effect of Data Encoding on the Expressive Power of Variational Quantum-Machine-Learning Models},
  journal = {Physical Review A},
  volume  = {103},
  number  = {3},
  pages   = {032430},
  year    = {2021}
}

@article{perezsalinas2020data,
  author  = {Adri{\'a}n P{\'e}rez-Salinas and Alba Cervera-Lierta and Elies Gil-Fuster and Jos{\'e} I. Latorre},
  title   = {Data Re-uploading for a Universal Quantum Classifier},
  journal = {Quantum},
  volume  = {4},
  pages   = {226},
  year    = {2020}
}

\appendix

\section*{Code Availability}
Our implementation of the hierarchical Fourier approximations as a python module for Pennylane and the experimental results provided in this manuscript are open-source and available at the following \href{https://github.com/sajjadhashemian/Fourier-Hierarchical-VQC/}{github} repository: \url{https://github.com/sajjadhashemian/Fourier-Hierarchical-VQC/}.

\section{Realization of Optimization Assumptions for Product Distributions}

We verify the local optimization assumptions for a product target and an unentangled product ansatz. To make the two stages consistent with the full objective below, this appendix uses the two-level hierarchy
\[
\mathcal S_0=\{\varnothing\},
\qquad
\mathcal S_1=2^{[n]}.
\]
Thus $p_0$ is uniform and $p_1=p$. This appendix does not assert that the same constants hold for the canonical degree-one truncation.

\subsection{Target Distribution, Ansatz, and Exact Objective}

\begin{definition}[Axis-aligned product distribution]
Let the target distribution on $\cube=\{-1,1\}^n$ be
\[
p(x)=\prod_{i=1}^n\frac{1+m_ix_i}{2},
\qquad
m_i\in[0,1-\epsilon],
\qquad
\epsilon\in(0,1).
\]
\end{definition}

The unentangled ansatz is
\[
U(\theta)=\bigotimes_{i=1}^nR_y(\theta_i),
\qquad
R_y(\theta_i)=\exp\left(-i\frac{\theta_i}{2}Y_i\right),
\]
and, with $x_i=1$ corresponding to computational-basis outcome $0$ and $x_i=-1$ to outcome $1$, its Born distribution is
\[
q_\theta(x)=\prod_{i=1}^n\frac{1+\cos(\theta_i)x_i}{2}.
\]
Write
\[
m_S:=\prod_{i\in S}m_i,
\qquad
c_i:=\cos(\theta_i),
\qquad
c_S:=\prod_{i\in S}c_i,
\]
with empty products equal to one. Direct expansion in the Walsh basis gives
\[
\widehat p(S)=2^{-n}m_S,
\qquad
\widehat q_\theta(S)=2^{-n}c_S.
\]
Consequently, Parseval's identity gives the exact stage-one objective
\begin{equation}\label{eq:appendix-exact-loss}
L_1(\theta)
:=
\|p-q_\theta\|_2^2
=
2^{-2n}\sum_{S\subseteq[n]}(m_S-c_S)^2.
\end{equation}
Equivalently,
\begin{equation}\label{eq:appendix-product-loss}
L_1(\theta)
=
2^{-2n}\left[
\prod_{i=1}^n(1+m_i^2)
+
\prod_{i=1}^n(1+c_i^2)
-
2\prod_{i=1}^n(1+m_ic_i)
\right].
\end{equation}
In particular, the higher-degree terms are parameter dependent; the full loss is not a sum of independent one-coordinate losses.

For later reference, define
\[
A_i(c):=\prod_{j\ne i}(1+c_j^2),
\qquad
B_i(c):=\prod_{j\ne i}(1+m_jc_j),
\]
and, for $i\ne j$,
\[
A_{ij}(c):=\prod_{\ell\ne i,j}(1+c_\ell^2),
\qquad
B_{ij}(c):=\prod_{\ell\ne i,j}(1+m_\ell c_\ell).
\]
Differentiating~\eqref{eq:appendix-product-loss} yields
\begin{equation}\label{eq:appendix-gradient}
\frac{\partial L_1}{\partial\theta_i}
=
2^{1-2n}\sin(\theta_i)
\left[m_iB_i(c)-c_iA_i(c)\right].
\end{equation}
The exact Hessian entries are
\begin{align}
\frac{\partial^2L_1}{\partial\theta_i^2}
&=
2^{1-2n}
\left[
(1-2c_i^2)A_i(c)+m_ic_iB_i(c)
\right],
\label{eq:appendix-hessian-diagonal}
\\
\frac{\partial^2L_1}{\partial\theta_i\partial\theta_j}
&=
2^{1-2n}\sin(\theta_i)\sin(\theta_j)
\left[
2c_ic_jA_{ij}(c)-m_im_jB_{ij}(c)
\right]
\quad(i\ne j).
\label{eq:appendix-hessian-offdiagonal}
\end{align}

The unique optimum in $[0,\pi/2]^n$ is
\[
\theta^*:=(\arccos m_1,\ldots,\arccos m_n),
\]
for which $q_{\theta^*}=p$ and $L_1(\theta^*)=0$. At this point,
\begin{align}
[\nabla^2L_1(\theta^*)]_{ii}
&=
2^{1-2n}(1-m_i^2)\prod_{\ell\ne i}(1+m_\ell^2),
\label{eq:appendix-opt-hessian-diagonal}
\\
[\nabla^2L_1(\theta^*)]_{ij}
&=
2^{1-2n}
\sqrt{1-m_i^2}\sqrt{1-m_j^2}\,m_im_j
\prod_{\ell\ne i,j}(1+m_\ell^2)
\quad(i\ne j).
\label{eq:appendix-opt-hessian-offdiagonal}
\end{align}

\subsection{Uniform Local Strong Convexity and Smoothness}

Set
\[
\gamma:=2\epsilon-\epsilon^2,
\qquad
M:=10n^{3/2}2^{-n},
\qquad
r:=\frac{\gamma\,2^{-n}}{10n^{3/2}},
\]
and define the convex neighborhood
\[
\mathcal N_1
:=
\{\theta\in\R^n:\|\theta-\theta^*\|_2\le r\}.
\]

\begin{theorem}[Local Strong Convexity and Smoothness]\label{thm:appendix-geometry}
On $\mathcal N_1$, the exact objective~\eqref{eq:appendix-exact-loss} satisfies
\[
\mu_1I
\preceq
\nabla^2L_1(\theta)
\preceq
\beta_1I,
\]
with the valid explicit choices
\[
\mu_1:=2^{-2n}\gamma,
\qquad
\beta_1:=n2^{-n}+2^{-2n}\gamma.
\]
Thus Assumption~\ref{ass:geom} holds on $\mathcal N_1$.
\end{theorem}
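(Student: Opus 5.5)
The plan is a perturbation argument around $\theta^*$. First I get sharp two-sided bounds on $\nabla^2L_1(\theta^*)$ from the explicit formulas \eqref{eq:appendix-opt-hessian-diagonal}--\eqref{eq:appendix-opt-hessian-offdiagonal}. Then I bound how far $\nabla^2L_1(\theta)$ can drift over the ball $\mathcal N_1$ using a uniform bound on third derivatives. The constants are arranged so that this drift is exactly $Mr=\gamma 2^{-2n}$. The target is therefore $\nabla^2L_1(\theta^*)\succeq 2\gamma 2^{-2n}I$ and $\nabla^2L_1(\theta^*)\preceq n2^{-n}I$. Weyl's inequality then gives $\mu_1=2\gamma2^{-2n}-\gamma2^{-2n}$ and $\beta_1=n2^{-n}+\gamma 2^{-2n}$.

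\emph{Step 1 (Hessian at the optimum).} Since $L_1(\theta)=\|p-q_\theta\|_2^2$ is a least-squares objective with zero residual at $\theta^*$, its Hessian there is the Gauss--Newton matrix
\[
\nabla^2L_1(\theta^*)=2\sum_{S\subseteq[n]}v_Sv_S^{\top},
\qquad
(v_S)_i=\partial_{\theta_i}\widehat q_\theta(S)\big|_{\theta^*}=-2^{-n}\sin\theta_i^*\,c^*_{S\setminus\{i\}}\mathbf 1[i\in S].
\]
One can check that this reproduces \eqref{eq:appendix-opt-hessian-diagonal}--\eqref{eq:appendix-opt-hessian-offdiagonal}. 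Every summand is positive semidefinite, so I keep only the singletons $S=\{i\}$. Their sum is the diagonal matrix $2^{1-2n}\operatorname{diag}(1-m_i^2)$. Because $m_i\le 1-\epsilon$, we have $1-m_i^2\ge\gamma$, which gives $\nabla^2L_1(\theta^*)\succeq 2^{1-2n}\gamma I$.

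For the upper bound I use Gershgorin. Every product $\prod(1+m_\ell^2)$ over $n-1$ indices is at most $2^{n-1}$, and over $n-2$ indices at most $2^{n-2}$. Hence each diagonal entry is at most $2^{-n}$ and each off-diagonal entry is at most $2^{-n}/2$. The row sums are therefore at most $n2^{-n}$, so $\lambda_{\max}\le n2^{-n}$. Any slack here can be absorbed, since only $\preceq n2^{-n}I$ is needed.

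\emph{Step 2 (Lipschitz Hessian).} I differentiate \eqref{eq:appendix-hessian-diagonal}--\eqref{eq:appendix-hessian-offdiagonal} once more in $\theta_k$. Each resulting term is $2^{1-2n}$ times a product of bounded trigonometric factors and at most a few of the factors $m_i$, $c_i$. These multiply a product $A_\cdot$ or $B_\cdot$ (or one with an index removed), which is at most $2^{n}$. The derivative of $A_i$ with respect to $\theta_k$ is $-2c_k s_k A_{ik}$, and similarly for $B_i$. So the only growth is a factor $2$ per differentiation, not a factor $2^n$. I therefore aim to show
\[
|\partial_k\partial_i\partial_jL_1(\theta)|\le 10\cdot 2^{-n}
\qquad\text{for all } \theta \text{ and all } i,j,k .
\]
This should hold globally, not just on $\mathcal N_1$, since $|c_\ell|\le1$. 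The third-derivative tensor then has Frobenius norm at most $10n^{3/2}2^{-n}=M$. By the mean value theorem along segments in the convex set $\mathcal N_1$,
\[
\|\nabla^2L_1(\theta)-\nabla^2L_1(\theta^*)\|_{\mathrm{op}}\le M\|\theta-\theta^*\|_2\le Mr=\gamma2^{-2n}.
\]

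\emph{Step 3.} I combine Steps 1 and 2 via Weyl's inequality to obtain the stated $\mu_1$ and $\beta_1$. Convexity of the ball then gives Assumption~\ref{ass:geom}.

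The main obstacle is the bookkeeping in Step 2: confirming that the constant $10$ really dominates every third-derivative entry. This requires handling the cases $i=j=k$, two coincident indices, and three distinct indices separately. In the distinct-index case, products $A_{ijk}$ with three indices removed appear. In each case I would bound the terms by crude counting, at most about five terms each of size at most $2\cdot 2^{1-2n}2^{n-1}$.
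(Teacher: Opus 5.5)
Your proposal is correct and follows the paper's proof: the Gauss--Newton/Gram form of $\nabla^2L_1(\theta^*)$ with the singleton terms giving $2^{1-2n}\gamma I$, a uniform $10\cdot2^{-n}$ third-derivative bound yielding the Hessian-Lipschitz constant $M$, and Weyl's inequality with $Mr=\gamma2^{-2n}$. The differences are only in bookkeeping. The paper bounds $\lambda_{\max}$ by $2^{1-2n}\sum_S\|\nabla c_S(\theta^*)\|_2^2\le 2^{1-2n}\sum_S|S|=n2^{-n}$ rather than by Gershgorin, and it gets the constant $10$ from a product-rule bound on each Fourier summand $(m_S-c_S)^2$, which avoids your case split (done carefully, your case split actually gives the sharper bound $3\cdot2^{-n}$).
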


\begin{proof}
From~\eqref{eq:appendix-exact-loss}, the Hessian at the zero-residual point $\theta^*$ is a Gram matrix:
\[
\nabla^2L_1(\theta^*)
=
2^{1-2n}
\sum_{S\subseteq[n]}
\nabla c_S(\theta^*)\nabla c_S(\theta^*)^{\mathsf T}.
\]
The singleton terms $S=\{i\}$ imply
\[
\nabla^2L_1(\theta^*)
\succeq
2^{1-2n}\operatorname{diag}(1-m_1^2,\ldots,1-m_n^2)
\succeq
2^{1-2n}\gamma I.
\]
Moreover, $\|\nabla c_S(\theta^*)\|_2^2\le|S|$, so
\[
\lambda_{\max}(\nabla^2L_1(\theta^*))
\le
2^{1-2n}\sum_{S\subseteq[n]}|S|
=
n2^{-n}.
\]

It remains to make these pointwise bounds uniform. Every partial derivative of $c_S$ of order at most three has absolute value at most one. Since $|m_S-c_S|\le2$, every third partial derivative of $(m_S-c_S)^2$ has absolute value at most $10$. Summing the $2^n$ Fourier terms in~\eqref{eq:appendix-exact-loss} therefore gives
\[
\max_{i,j,k}
\left|
\frac{\partial^3L_1}{\partial\theta_i\partial\theta_j\partial\theta_k}
\right|
\le
10\,2^{-n}.
\]
Hence, for all $\theta,\vartheta\in\R^n$,
\[
\|\nabla^2L_1(\theta)-\nabla^2L_1(\vartheta)\|_{\mathrm{op}}
\le
M\|\theta-\vartheta\|_2.
\]
For $\theta\in\mathcal N_1$, the definition of $r$ gives
\[
Mr
=
2^{-2n}\gamma
=
\frac12\,2^{1-2n}\gamma.
\]
Weyl's eigenvalue inequality now yields
\[
\lambda_{\min}(\nabla^2L_1(\theta))
\ge
2^{1-2n}\gamma-Mr
=
2^{-2n}\gamma
=
\mu_1
\]
and
\[
\lambda_{\max}(\nabla^2L_1(\theta))
\le
n2^{-n}+Mr
=
n2^{-n}+2^{-2n}\gamma
=
\beta_1.
\]
These bounds hold uniformly over all of $\mathcal N_1$.
\end{proof}

\subsection{Warm Start and Trajectory Containment}

The stage-zero target is uniform, and its product-circuit optimum is
\[
\theta^{(0)}
=
(\pi/2,\ldots,\pi/2),
\qquad
q_{\theta^{(0)}}(x)=2^{-n}.
\]
The distance from this point to the stage-one optimum is
\[
\|\theta^{(0)}-\theta^*\|_2
=
\left(\sum_{i=1}^n\arcsin(m_i)^2\right)^{1/2}.
\]
The original condition $m_i\in[0,1-\epsilon]$ alone does not force this distance to be smaller than $r$. The following theorem therefore states a sufficient additional condition for this concrete warm start to lie in the verified local neighborhood. The radius is deliberately conservative: it is obtained from a global Hessian-Lipschitz bound so that all constants remain explicit.

\begin{theorem}[Warm-start Compatibility and Containment]\label{thm:appendix-warm}
Assume, in addition to the target definition, that
\begin{equation}\label{eq:appendix-small-bias}
\left(\sum_{i=1}^n\arcsin(m_i)^2\right)^{1/2}
\le
\frac r2.
\end{equation}
Let $W_0$ be the identity map and initialize stage one at $W_0(\theta^{(0)})=\theta^{(0)}$. Gradient descent with step size $1/\beta_1$ then remains in $\mathcal N_1$ and satisfies
\[
\|\theta^{(t)}-\theta^*\|_2
\le
\left(1-\frac{\mu_1}{\beta_1}\right)^t
\|\theta^{(0)}-\theta^*\|_2.
\]
Consequently, Assumption~\ref{ass:warm} and the trajectory-containment condition of Lemma~\ref{lem:warmstart} both hold for this example. The objective error obeys
\[
L_1(\theta^{(t)})-L_1(\theta^*)
\le
\left(1-\frac{\mu_1}{\beta_1}\right)^t
\bigl(L_1(\theta^{(0)})-L_1(\theta^*)\bigr),
\]
and it is at most $\varepsilon_1$ after
\[
t
\ge
\frac{\beta_1}{\mu_1}
\log\frac{L_1(\theta^{(0)})-L_1(\theta^*)}{\varepsilon_1}
\]
iterations.
\end{theorem}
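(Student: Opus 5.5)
The plan is to prove the distance contraction directly from the uniform Hessian bounds of Theorem~\ref{thm:appendix-geometry}, by an averaged-Hessian argument. The objective bound will then follow from Lemma~\ref{lem:warmstart}, and the iteration count from the optimization complexity theorem.

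\emph{Step 1 (one-step contraction inside $\mathcal N_1$).} Since $L_1(\theta^*)=0$ and $L_1\ge0$, and $\theta^*$ is an interior point of $\R^n$, we have $\nabla L_1(\theta^*)=0$. For any $\theta\in\mathcal N_1$, the segment $[\theta^*,\theta]$ lies in the convex ball $\mathcal N_1$. By the fundamental theorem of calculus,
\[
\nabla L_1(\theta)=\bar H(\theta-\theta^*),
\qquad
\bar H:=\int_0^1\nabla^2L_1\bigl(\theta^*+s(\theta-\theta^*)\bigr)\,ds .
\]
Theorem~\ref{thm:appendix-geometry} gives $\mu_1I\preceq\bar H\preceq\beta_1I$, because $\bar H$ is an average of matrices that all satisfy these bounds. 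The gradient step $\theta^+=\theta-\beta_1^{-1}\nabla L_1(\theta)$ therefore satisfies
\[
\theta^+-\theta^*=\bigl(I-\beta_1^{-1}\bar H\bigr)(\theta-\theta^*).
\]
The symmetric matrix $I-\beta_1^{-1}\bar H$ has eigenvalues in $[0,1-\mu_1/\beta_1]$, so
\[
\|\theta^+-\theta^*\|_2\le(1-\mu_1/\beta_1)\|\theta-\theta^*\|_2 .
\]

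\emph{Step 2 (containment by induction).} By \eqref{eq:appendix-small-bias}, $\|\theta^{(0)}-\theta^*\|_2\le r/2<r$, so $\theta^{(0)}\in\mathcal N_1$. Suppose $\theta^{(t)}\in\mathcal N_1$. Step 1 applies and gives
\[
\|\theta^{(t+1)}-\theta^*\|_2\le\|\theta^{(t)}-\theta^*\|_2\le r/2,
\]
so $\theta^{(t+1)}\in\mathcal N_1$ as well. Iterating the one-step bound yields the displayed distance bound, and it also yields trajectory containment. The same bound shows $W_0(\theta^{(0)})=\theta^{(0)}\in\mathcal N_1$, which is the content of Assumption~\ref{ass:warm} for the stage-zero optimum.

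\emph{Step 3 (objective bound and iteration count).} Assumption~\ref{ass:geom} holds on $\mathcal N_1$ by Theorem~\ref{thm:appendix-geometry}, and Steps 1--2 supply warm-start compatibility and containment. Lemma~\ref{lem:warmstart} with $k=0$ therefore gives the contraction of $L_1(\theta^{(t)})-L_1(\theta^*)$. The optimization complexity theorem, via $1-a\le e^{-a}$, then gives the stated iteration count.

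\emph{Main obstacle.} The delicate point is the scope of Assumption~\ref{ass:warm}. That assumption quantifies over every $\varepsilon_0$-suboptimal stage-zero output, not only the exact optimum $\theta^{(0)}=(\pi/2,\dots,\pi/2)$. I would first state the result for the exact stage-zero minimizer, i.e.\ $\varepsilon_0=0$, which is what the theorem literally initializes from. The reason the hypothesis uses $r/2$ rather than $r$ is that the spare $r/2$ can absorb an inexact stage-zero solution. For $\varepsilon_0>0$, I would get this from quadratic growth of $L_0(\theta)=\|u-q_\theta\|_2^2$ near $\theta^{(0)}$. Its Hessian there is, by the same Gram-matrix computation with $m_i=0$, at least $2^{1-2n}I$, and it is Hessian-Lipschitz with constant $M$. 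Hence any $\theta_0$ with $L_0(\theta_0)\le\varepsilon_0$ in a small ball around $\theta^{(0)}$ satisfies $\|\theta_0-\theta^{(0)}\|_2\le r/2$ once
\[
\varepsilon_0\lesssim 2^{-2n}r^2 .
\]
The triangle inequality then places $W_0(\theta_0)$ in $\mathcal N_1$. Running the argument of Steps 1--2 from a start within distance $r$ only guarantees the distance stays at most $r$, which still suffices for containment.
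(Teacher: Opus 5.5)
Your proof is the paper's proof. It uses the averaged-Hessian identity $\nabla L_1(\theta^{(t)})=\bar H(\theta^{(t)}-\theta^*)$ with $\mu_1I\preceq\bar H\preceq\beta_1I$, the resulting contraction and induction keeping every iterate within $r/2$ of $\theta^*$, and then Lemma~\ref{lem:warmstart} and the complexity bound for the objective and iteration count; you also state $\nabla L_1(\theta^*)=0$ explicitly, which the paper uses tacitly.

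Your remark on the scope of Assumption~\ref{ass:warm} goes beyond the paper, which only checks it at the single output $\theta^{(0)}$. Note that $\varepsilon_0=0$ alone does not suffice when $\Theta_0=\R^n$, since e.g.\ $(-\pi/2,\ldots,-\pi/2)$ is also an exact stage-zero minimizer far from $\theta^*$, so your restriction to a ball around $\theta^{(0)}$ is genuinely needed. On $[0,\pi]^n$, the identity $L_0(\theta)=2^{-2n}\bigl(\prod_i(1+\cos^2\theta_i)-1\bigr)\ge 2^{-2n}\sum_i\cos^2\theta_i$ gives the required growth directly.
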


\begin{proof}
Condition~\eqref{eq:appendix-small-bias} gives
\[
\|W_0(\theta^{(0)})-\theta^*\|_2
\le r/2<r,
\]
so the warm start belongs to $\mathcal N_1$. Suppose inductively that $\theta^{(t)}\in\mathcal N_1$. The segment from $\theta^*$ to $\theta^{(t)}$ is contained in the convex set $\mathcal N_1$, and the fundamental theorem of calculus gives
\[
\nabla L_1(\theta^{(t)})
=
H_t(\theta^{(t)}-\theta^*),
\qquad
H_t
:=
\int_0^1
\nabla^2L_1\bigl(\theta^*+s(\theta^{(t)}-\theta^*)\bigr)\,ds.
\]
Theorem~\ref{thm:appendix-geometry} implies $\mu_1I\preceq H_t\preceq\beta_1I$. Therefore
\begin{align*}
\|\theta^{(t+1)}-\theta^*\|_2
&=
\left\|
\left(I-\frac1{\beta_1}H_t\right)
(\theta^{(t)}-\theta^*)
\right\|_2
\\
&\le
\left(1-\frac{\mu_1}{\beta_1}\right)
\|\theta^{(t)}-\theta^*\|_2.
\end{align*}
Thus every iterate remains within distance $r/2$ of $\theta^*$, proving trajectory containment. The objective contraction and iteration bound then follow from Lemma~\ref{lem:warmstart} and the iteration bound following it.
\end{proof}

Without the additional small-bias condition~\eqref{eq:appendix-small-bias}, the local geometry theorem remains valid around $\theta^*$, but the proposed uniform warm start is not guaranteed to enter that neighborhood. No convergence claim from $\theta^{(0)}=(\pi/2,\ldots,\pi/2)$ follows from the local theory in that case.

\end{document}